\newcommand{\sD}{{\sf D}}
\newcommand{\tr}{\text{tr}}
\newcommand{\vect}{\text{$\mathrm{vec}$}}
\newcommand{\df}{\text{df}}
\newcommand{\GIC}{\text{GIC}}
\newcommand{\KL}{\text{KL}}
\newcommand{\omegahat}{\hat{\omega}}
\newcommand{\Omegahat}{\hat{\Omega}}
\journal{arXiv}
\newtheorem{lemma}{Lemma}
\begin{document}

\begin{frontmatter}



 \title{Generalized information criterion for model selection in penalized graphical models}
 \author[label1]{A. Abbruzzo\corref{cor1}\fnref{labels}}
  \fntext[labels]{Corresponding author: Antonino Abbruzzo\\
   Phone number: 3293883847}
  \ead{antonino.abbruzzo@unipa.it}
 \author[label2]{Ivan Vuja\v{c}i\'c}
 \author[label2]{Ernst Wit}
 \author[label1]{Angelo M. Mineo}
 \address[label1]{Dipartimento Scienze Economiche, Aziendali e Statistiche, University of Palermo, Viale delle Scienze Ed. 13, 90128 Palermo, Italy }
 \address[label2]{Johann Bernoulli Institute, University of Groningen, Nijenborgh 9, 9747 AG Groningen, The Netherlands}

\begin{abstract}
This paper introduces an estimator of the relative directed distance between an estimated model and the true model, based on the Kulback-Leibler divergence and is motivated by the generalized information criterion proposed by Konishi
and Kitagawa. This estimator can be used to select model in penalized Gaussian copula graphical models. The use of this estimator is not feasible for high-dimensional cases. However, we derive an efficient way to compute this 
estimator which is feasible for the latter class of problems. Moreover, this estimator is, generally, appropriate for several penalties such as lasso, adaptive lasso and smoothly clipped absolute deviation penalty.
Simulations show that the method performs similarly to KL oracle estimator and it also improves BIC performance in terms of support recovery of the graph. Specifically, we compare our method with Akaike information criterion, Bayesian information criterion and cross validation for band, sparse and dense network structures. 
\end{abstract}

\begin{keyword}
Gaussian copula graphical model\sep  Penalized likelihood\sep  Kulback-Leibler divergence\sep Model complexity\sep  Generalized Information Criterion.


\end{keyword}

\end{frontmatter}



\section{Introduction}
Graphical models are powerful tools for analyzing relationships between a large number of random variables. Their fundamental importance and universal applicability are due to two main features. Firstly, graphs can be used to represent
 complex dependences among random variables. Secondly, their structure is modular which means that complex graphs can be built from many simpler graphs.  A graph consists of a set of nodes and a set of links between these nodes. In a
 graphical model nodes are associated with random variables and links represent conditional dependencies \cite{lauritzen96}. Many modern experimental studies consist of observing many features across a small amount of statistical units. For example in genetics, variables
are typically thousands of genes or gene products, whereas there are only few hundreds of observations. At the same time, many of these variables are pairwise conditionally independent from each other, when we condition on the rest of the variables. This could be because a lot of variables encode effectively the same information, as might be the case in financial time series. Or because there are specific functional relationships between the variables, as is the case in genomic pathways. Either way, this means that many of the underlying graphs in these experiments are sparse. 

Here, we focus on the class of Gaussian copula graphical models \cite{xue2000multivariate}. In such models, nodes can have arbitrary marginal distributions, but have a correlation structure that is induced by means of a multivariate Gaussian copula.
 This model has as a special case the  Gaussian graphical model (GGM). In such model, the random variables follow a multivariate normal distribution. An important property of GGMs is that the precision matrix determines the conditional independence graph. As the dependence structure in a Gaussian copula model is determined uniquely by the underlying normal distribution,  also the conditional dependence graph for Gaussian copula models is determined uniquely by the precision matrix.

Several methodologies have been proposed for estimating the precision matrix associated with an unknown Gaussian graphical model. \citet{whittaker2009graphical} describes likelihood ratio tests. Recently attention has been 
focused on penalized likelihood approaches. The idea is to penalize the maximum likelihood function with a sparsity penalty on the precision matrix. A tuning parameter in the penalty regulates the sparsity: the larger
the tuning parameter, the more  zeroes will be estimated in the precision matrix. In this setting, \citet{meinshausen2006high} proposed to select edges for each node in the graph by regressing the variable on all other 
variables using $\ell_1$ penalized regression. Penalized maximum likelihood approaches using the $\ell_1$-norm have been considered by \citet{yuan2007model,banerjee2008model,rothman2008sparse}, who have all proposed
 different algorithms for computing an estimator of the precision matrix. An efficient graphical lasso algorithm, glasso, was proposed by \citet{friedman2008sparse}. \citet{lam2009sparsistency} study properties of general penalties that include lasso and adaptive lasso and SCAD as a special cases. 
\par
A crucial issue with penalized maximum likelihood estimation is the choice of the regularization parameter, which controls the amount of penalization. We can divide the methods for choosing the regularization parameter in
various ways. One type of methods are computation-based methods such as cross-validation (CV) \citep{rothman2008sparse,fan2009network,schmidt2010graphical,fitch2012computationally}. The other type are information criteria such as AIC \citep{menendez2010gene}, 
BIC \citep{yuan2007model,schmidt2010graphical,menendez2010gene,lian2011shrinkage,gao2012tuning} and EBIC \citep{foygel2010ebic}.
In principle, computation-based methods can be expected to be more accurate since, usually, they involve reusing the data, but they are computationally more expensive. Information criteria are less accurate
 but computationally faster. On the other hand, one can divide the methods in those that try to maximize the prediction power, i.e. minimize Kulback-Leibler divergence, such as the AIC and CV, and those whose aim is model selection consistency, such as BIC and EBIC. 

In this paper, we introduce the Gaussian copula graphical models (CGGMs) based on penalized likelihood approaches
(Section \ref{sec:graphicalmodel}). Then, motivated by Konishi's generalized information criterion \citep{Konishi96,Konishi08}, we derive an estimator of the relative distance between
a model and the true model, based on the Kulback-Leibler divergence
(Section \ref{sec:gic}). This estimator can be used as model selection criterion for GCGMs. The direct use of the estimator derived in Section \ref{sec:gic} would make it not feasible for high-dimensional problems. So, we derive a feasible formula to compute the proposed estimator efficiently (Section \ref{sec:egic}). In Section \ref{sec:sim}, we show a simulation study. Simulation study shows, that the method approximates the Kulback-Leibler divergence much better than the AIC and that in combination with
other selection methods it can be used to estimate graph structures from data.




\section{Graphical models based on penalized likelihood}
\label{sec:graphicalmodel}

Let  $X=(X^{(1)}, \ldots, X^{(d)})^{\top}$ be a random vector with probability distribution $P$. The main advantage of using graphical models is that conditional independence relationships among $\{X^{(i)}: i = 1,\ldots,d\}$ can be summarized in an undirected graph $G=(V,E)$ where $V = \{1, \ldots, d\}$ is a set of nodes.
We associate with each node $i$ the random variable $X^{(i)}$. The set $E$ is the set of undirected edges, where an undirected edge $e=\{i,j\} \subset V$ is a subset with two elements of $V$. The absence of an edge between nodes $i$ and $j$ corresponds with conditional independence of the random variables $X^{(i)}$ and $X^{(j)}$ given the rest. The pair $(G,P)$ is referred to as a graphical model, if 
\[ X^{(i)} \perp_P X^{(j)} | X^{V\setminus\{i,j\}} \Longleftrightarrow (i,j) \not \in E ,\]
i.e. when the links represent conditional dependence relationships between the random variables. In this paper we consider only models with strictly positive densities and therefore all the various Markov properties are equivalent and the density permits a factorization according to the clique structure in the graph \cite{lauritzen96}.

Assume $n$ i.i.d observations where $X_k \sim N(0, \Omega^{-1}), k = 1, \ldots, n$, i.e. $P$ is multivariate normal density function, then $(G, P)$ is referred to Gaussian graphical model. If an element of the precision matrix $\Omega$, denoted by $\omega_{i,j}$, is equal to zero, then it can be proved that $X^{(i)}$ and $X^{(j)}$ are conditionally independent given the rest \cite{whittaker2009graphical}. Thus the parameter estimation and model selection for Gaussian graphical models are equivalent to estimating parameters and identifying zeros in the precision matrix. 
 
Suppose we have $n$ multivariate observations $x_1, \ldots, x_n$ of dimension $d$ from distribution $N(0,\Omega_0^{-1})$. The log likelihood function is up to an additive constant equal to:

\begin{equation}
\label{eq:gaussian}
l(\Omega)= \frac{n}{2}\left\{\mbox{log}|\Omega| - \mbox{tr}\left(\mbox{S}\Omega\right) \right\},
\end{equation}
where $\mbox{S}=(1/n) \sum_{k=1}^n x_k x_k^{\top}$,	$|\cdot|$ denotes the determinant, and $\mbox{tr}(\cdot)$ is the trace.

In order to have a more flexible family of models so that multivariate normal assumption can be relaxed 
we make use of the Gaussian copula model. 

\subsection{Gaussian copula graphical models}
The multivariate normal assumption for the joint distribution of $X$ can be relaxed if one assumes that the multivariate dependence patterns among the observed variables $X$ are given by the Gaussian copula. The class of graphical models given by $(G, P)$, where $P$ is a Gaussian copula density, is called  Gaussian copula graphical models. These models can have arbitrary marginal distributions and have a dependence structure induced via a multivariate Gaussian copula, i.e.:

\begin{equation}
C(u_{1} , \ldots, u_{d} |\Gamma) =  \Phi(\phi^{-1}(u_{1}), \ldots , \phi^{-1}(u_{d}) | \Gamma),
\end{equation}
where $\phi$ is the CDF of the standard normal distribution, $u_i = F_i(x_i)$ with $F_i$ a univariate distribution function, and $ \Phi(\cdot|\Gamma)$ is the CDF of a multivariate normal distribution with correlation matrix $\Gamma$. Generally, $F_i$ can be arbitrary marginal distributions. However, following arguments given in \cite{hoff2007extending}, we treat the marginal distributions $F_i$ as nuisance parameters. So,
we estimate them non parametrically since we are mainly interested in the dependences among the random variables. 

Given $n$ i.i.d. observations from a Gaussian copula, the corresponding log density function can be written as follows (see \cite{xue2000multivariate}):

\begin{equation}
c(u_1, \ldots, u_d;\Gamma) = -\frac{n}{2} \left\{\mbox{log}| \Gamma| - \mbox{tr}(\tilde{\mbox{S}} \Gamma^{-1}) + q^{\top}q 
\right\},
\label{eq:copula2}
\end{equation}
where $\tilde{\mbox{S}} =  \sum_{k=1}^n q_k q_k^{\top}$, with $q_i = \phi^{-1}(u_i)$, $i = 1, \ldots, d$,
and $q = (\phi^{-1}(u_1),  \ldots, \phi^{-1}(u_d))$. It can be shown that 
conditional independences between two random variables $i$ and $j$ given the rest are given by zeroes elements in $\Omega = \Gamma^{-1}$ \cite{dobra2009copula}, i.e. :
\begin{equation}
\label{eq:thetais0} \{i,j\} \notin E \Longleftrightarrow X^{(i)} \perp X^{(j)}|X^{(V\setminus\{i,j\})}\Longleftrightarrow \Omega_{ij} = 0.
\end{equation}
Alternatively, the Gaussian copula can be constructed by introducing a vector of latent variables $Z = (Z_1, \ldots ,Z_d)$ that are related to the observed variables $X$ as $X_j = F_j^{-1}(\Phi(Z_j))$,
where $F_j^{-1}$ is the pseudo inverse of $F_j$. As argued in \cite{dobra2009copula}, the Markov properties associated with a graphical model for the latent variables hold for the observed variables if all the marginals $\{F_j: j = 1, \ldots, d\}$ are continuous. The presence of some discrete marginals might induce additional dependencies that are not modeled in a graphical model with the latent variables. However, in what follows we consider continuous random variables .  The GGM is a special case of the Gaussian copula graphical models. In this case the marginal distributions $F_j$ are assumed normal. 



The set of all possible graphs is $2^{d(d-1)/2}$ which is huge even for relatively small $d$. However, sparsity of the graph is a main assumption in most application of graphical models in genetics. Penalized likelihood approaches can be used in order to incorporate the sparsity assumption and make estimation procedures feasible even for large $d$. Lasso, adaptive lasso and SCAD are three most commonly used in penalized likelihood approaches. In the next section, we briefly recall these penalty functions in conjunction with their properties.

\subsection{Penalty functions for Gaussian copula graphical models}
\label{sec:penalizedinference}
The aim of penalized inference in graphical models is to induce a zero structure in the matrix $\Omega$, which according to (\ref{eq:thetais0}) results in a particular edge structure $E$ of the graphical model $(G,P_{\Omega})$. The
penalized likelihood is obtained by adding a penalty $p_{\lambda_{ij} }$ function  to the likelihood. For a variety of penalty functions, the resulting maximum penalized likelihood estimate (MPLE) is as follows:
\begin{equation}
\label{eq:mple}
\hat{\Omega}_\lambda = \underset{\Omega\succ 0}{\operatorname{argmax}} \mbox{ log}|\Omega |- \mbox{tr}(
\tilde{\mbox{S}} \Omega) - \sum_{i\neq j}^p  p_{\lambda_{ij}}(|\omega_{ij}|)
\end{equation}
$p_{\lambda_{ij} }$ depends on a tuning parameter $\lambda_{ij}$ which regulates the weights on the corresponding $\omega_{ij}$ terms of the precision matrix. 

An important property which should hold for the penalized maximum likelihood estimator is its ability to recover the graph structure of the graphical model, the so-called \emph{sparsistency} \cite{lam2009sparsistency}. We shall use this property in motivating an important step in the calculation of the bias of the likelihood in estimating the Kulback-Leibler divergence. Whether or not $\hat{\Omega}_\lambda$ is sparsistent, depends on the penalty function $p_{\lambda_{ij}}$ and typically involves how the penalty parameter $\lambda$ goes down to zero in a controlled way.


 We consider three types of penalty functions namely the lasso, the adaptive lasso and the SCAD penalty. 
 The Lasso penalty uses the $\ell_1$-norm function on the elements of the matrix $\Omega$, i.e.:
 \[ p_{\lambda} = \lambda |\omega|.\]
It was originally considered in \cite{meinshausen2006high,yuan2007model,friedman2008sparse}. It has been shown that under some assumptions on the sparsity of $\Omega$, the MPLE is sparsistent. The main disadvantage is the bias introduced by this penalty whereas the main advantage is that it is a convex function. In order to reduce the bias when the true parameter has a relative large magnitude non convex penalty functions can be used. In particular, adaptive Lasso penalty, which is proposed in \cite{zou2006adaptive}, uses the $\ell_1$-norm function  $p_{\lambda}(|\omega|)=\lambda|\omega|$ updating the tuning parameters $\lambda_{ij}$  iteratively as follows:
$$\lambda_{ij}=\lambda/|\tilde{\omega}_{ij}|^{\gamma},$$
where $\gamma>0$ is some constant and $\tilde{\omega}_{ij}$ is any consistent estimator of $\omega_{ij}$. Finally,  the SCAD penalty $p_{a,\lambda}$ is implicitly defined by means of its derivative:
 \[ p\rq{}_{a,\lambda}(\omega) = \lambda 1_{\{\omega<\lambda\}} + \frac{(a\lambda -\omega)_+ 1_{\{\omega\geq\lambda\}}}{a-1},\]
for $a >2$. The SCAD penalty was introduced by \citet{fan2001variable}. \citet{lam2009sparsistency} showed that the resulting estimator for $\Omega$ is sparsistent.  In all the penalty functions we follow the usual conventional not penalizing the diagonal. 

SCAD and  adaptive lasso penalties lead to sparsistent estimates for $\Omega$ even when the dimensionality of the problem $d$ is allowed to change with $n$, whereas the $\ell_1$ penalty requires an additional assumption of sparsity \citep{lam2009sparsistency}.

The optimazation problem (\ref{eq:mple}) for the three penalty functions described above can be solved via iterative $\ell_1$ penalized problems.  The $\ell_1$ problem is convex and fast algorithms have been proposed and
implemented in a variety of packages \citep{friedman2008sparse,huge}.


\section{Generalized Information Criterion for penalized GCGMs}
\label{sec:gic}
The Kulback-Leibler (KL) divergence can be used to measure the \lq\lq{}distance\rq\rq{} between the model and the true data generating mechanism and as such stands at the basis of several model selection methods.  It is known that a biased estimator of the KL divergence is minus twice the log-likelihood
evaluated at the point of maximum likelihood estimation.

\paragraph{Example 1}
For example, assume that the model is given by the distribution $N(0,\Omegahat_{\lambda}^{-1})$ and that the true data mechanism is given by the distribution $N(0,\Omega_0^{-1})$. Then the KL divergence from the model to the 
true distribution is (see \citep{penny2001})
\begin{equation}
\label{klloss}
\KL(\Omega_0|\Omegahat_{\lambda})=\frac12\left\{ \tr(\Omega_0^{-1}\Omegahat_{\lambda})-\log|\Omega_0^{-1}\Omegahat_{\lambda}|-d\right\}.
\end{equation}
This can be written in the scaled form as
\begin{equation}
\label{eq:dfkl}
2n\KL(\Omega_0|\Omegahat)\cong -2l(\Omegahat_{\lambda}) + n\tr\{\Omegahat(\Sigma_{0}- \mbox{S})\},
\end{equation} 
from which we see that the scaled likelihood $-2l(\hat{\Omega}_\lambda)$ is a biased estimate of the KL divergence. The scaling constant is introduced to link this quantity to the Akaike's information criterion. 
The bias is given as $n\tr(\Omegahat(\Sigma_{0}-\mbox{S}))$ and involves the unknown quantity $\Sigma_0$. 
\vspace{1cm}

For general problems, under the maximum likelihood approach, the most well-known estimate of the bias is provided by AIC, which estimates the bias as
twice the degrees of freedom which are given as the number of parameters in the model.\\
In graphical model estimation, the degrees of freedom are defined as
 \begin{equation}
 \label{eq:dfaic}
\df(\lambda)=\sum_{1\leq i<j\leq p}I(\omegahat_{ij,\lambda}\neq 0),
 \end{equation}
 which corresponds to the number of non zero elements in the estimated precision matrix. In the next section we propose to compute the degrees of freedom using the GIC. This bias term is shown in the next simple example and compared with (\ref{eq:dfaic}) and (\ref{eq:dfkl}).
 
 \paragraph{Example 2}
 Figure \ref{fig:biasterm1} shows an example in which we calculate the bias term with formula (\ref{dfgic})
for 100 values of $\lambda\in [0,0.7]$  indicated with df-GIC. This bias term is compared with the bias term in  (\ref{eq:dfaic}) indicated with 
df-AIC, and the bias term calculated using formula (\ref{eq:dfkl}) where we know the true model that generates the data. The bias terms are shown in 
dotted, broken, and straight lines for GIC, AIC and KL, respectively. When the tuning parameter increases, the bias term decreases as expected. GIC bias is most of the time below the curve of the true bias term, whereas AIC bias is always above. This is due to the
 fact that in penalized likelihood one parameter that is not estimated as zero in the precision matrix should not be counted as one. Even though the difference between bias term calculated with AIC and GIC looks small in the figure, it brings instability in the estimation of the \lq\lq{}best\rq\rq{} tuning parameter as shown in Figure \ref{fig:biasterm2}. Specifically, Figure \ref{fig:biasterm2} shows the behavior of GIC (red line) and AIC (blue line) with respect to the Kullback-Leibler divergence. AIC is minimized in a point which is not so distant from the KL but it looks very unstable, whereas GIC is really close to KL and exhibits a more stable behavior. 

\begin{figure}[!ht]
        \centering
        \begin{subfigure}[b]{0.50\textwidth}
        \includegraphics[width=\textwidth]{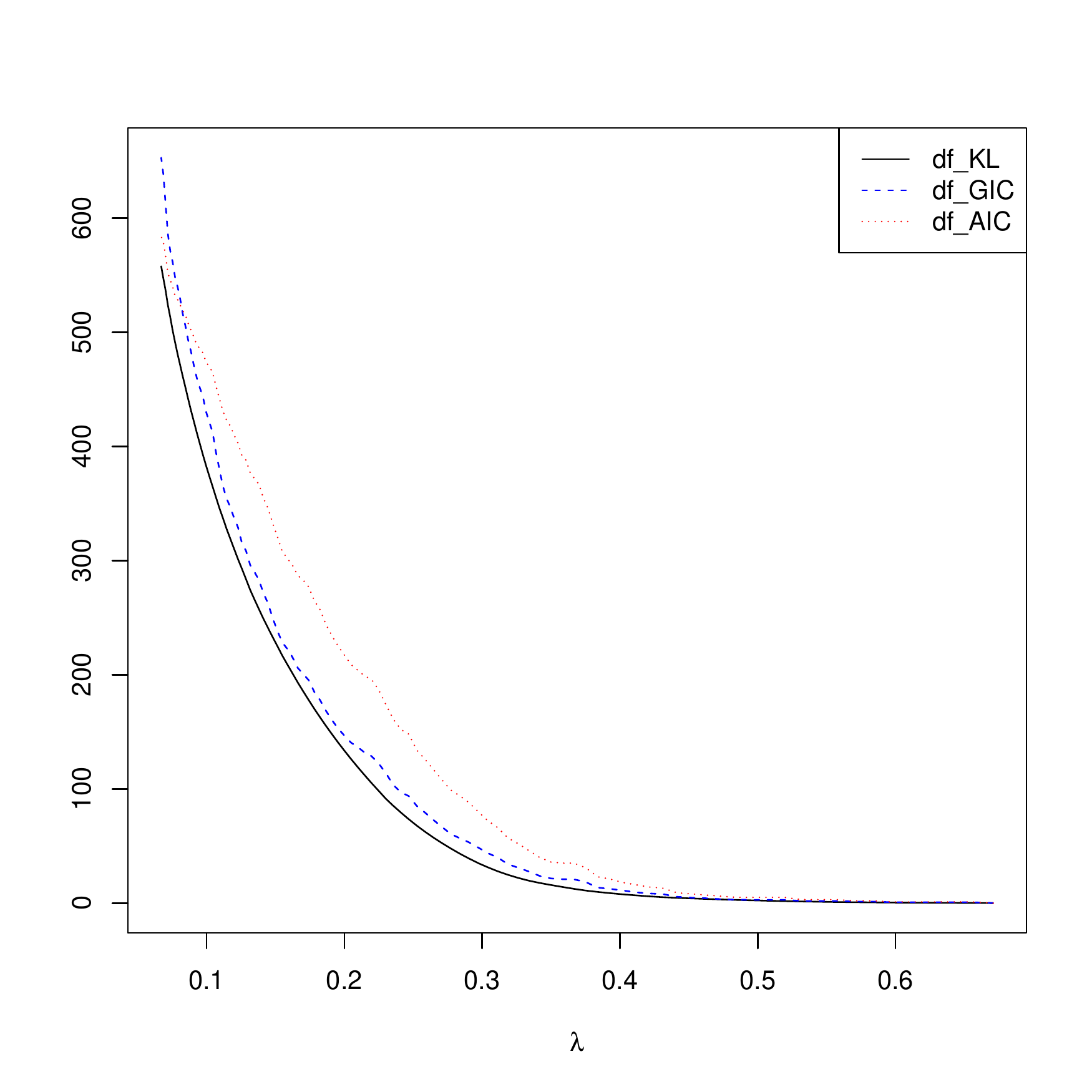}
                \caption{Bias terms}
                \label{fig:biasterm1}
        \end{subfigure}%
        ~ 
           \quad
        \begin{subfigure}[b]{0.5\textwidth}
        \includegraphics[width=\textwidth]{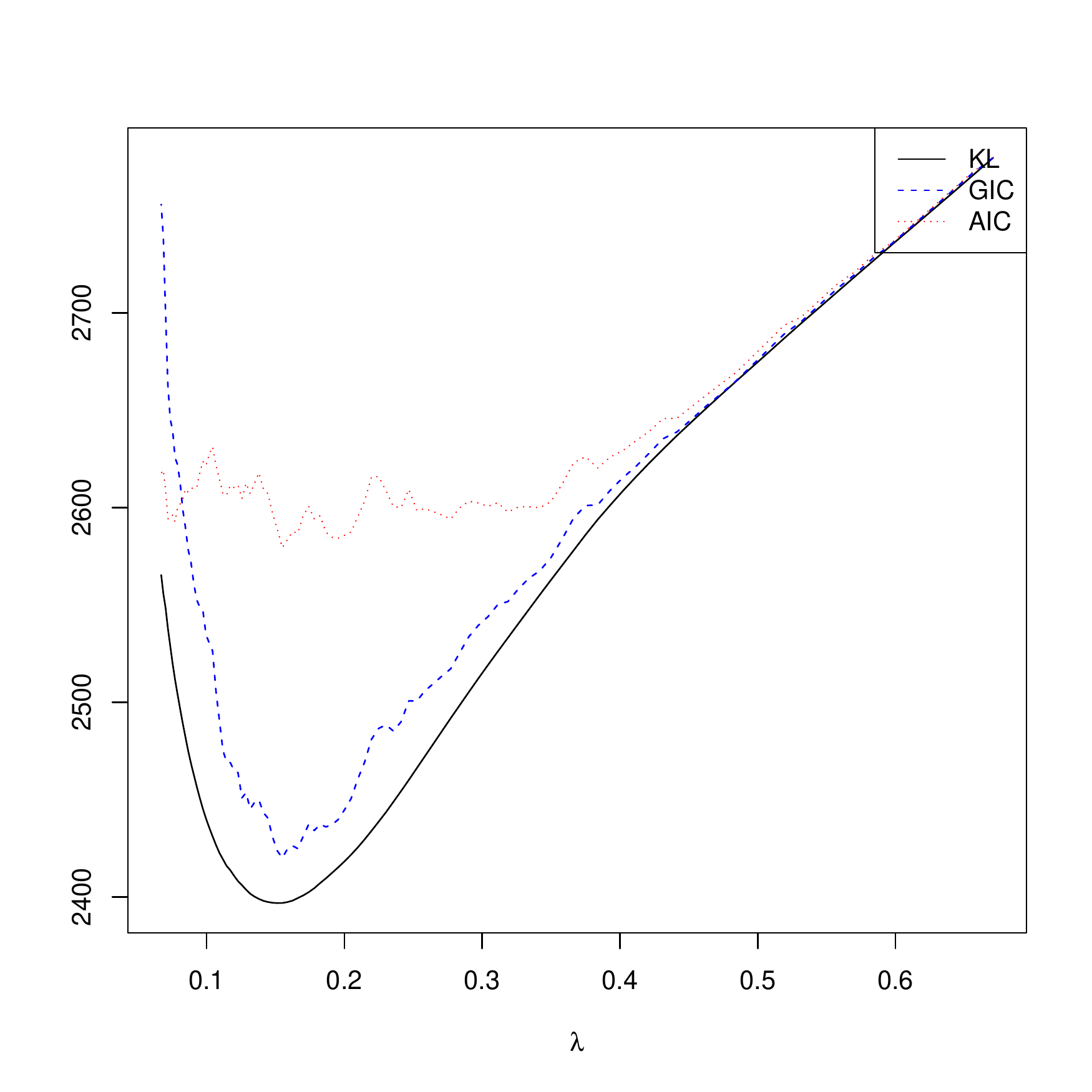}
                \caption{Information criteria}
                \label{fig:biasterm2}
        \end{subfigure}
        \caption{Bias terms (\ref{fig:biasterm1}) and information criteria (\ref{fig:biasterm2}) for AIC, GIC and KL calculated generating data from a multivariate normal distribution with n = 100 samples and d = 50 variables. \label{fig:bias_ex}}
\end{figure}

 \vspace{1cm}

The generalized information criterion (GIC), introduced in \citep{Konishi96}, provides an estimate of Kullack-Leibler divergence for wide class of estimators. Maximum likelihood estimator and penalized maximum likelihood
estimator are examples of an $M$-estimator which is defined as a solution of the system
 of equations
\begin{equation}
\label{eq:M-estimator} 
\sum_{k=1}^n \psi(q_{k},\Omega)= 0.
\end{equation}
The GIC for \textit{M}-estimator $\Omegahat$ defined by (\ref{eq:M-estimator}) is given by \citep[p. 117]{Konishi08}:
\begin{equation}
\label{eq:GICgeneral1}
\GIC=-2\sum_{k=1}^n l_{k}(\hat{\Omega}; q_k)+2\tr(\mbox{R}^{-1}\mbox{Q}),
\end{equation}
where  $\mbox{R}$ and $\mbox{Q}$ are square matrices of order $d^2$ given by 
\begin{equation}
\label{RQ}
\mbox{R}=-\frac1n\sum_{k=1}^n\{\sD\psi(q_{k},\Omegahat)\}^{\top}, \qquad \qquad
\mbox{Q}=\frac1n\sum_{k=1}^n\psi(q_{k},\Omega)\sD l_k(\Omegahat).
\end{equation}

In order to compute the bias term in (\ref{eq:GICgeneral1}) we need to derive the terms $\mbox{Q}$ and $\mbox{R}^{-1}$
for GCGMs. However, penalized estimator penalties are usually not differentiable and 
the bias term in (\ref{eq:GICgeneral1}) would require matrix multiplication for matrix of dimension $d^2$ and the storage of $r$ matrices of dimension $d^2$, where $r$ is the length of $\lambda$ term.
The former problem would make the derivation of GIC impossible while the latter would make the use of GIC infeasible for high-dimensional problems. Since in case of penalized estimator penalties are usually not differentiable, we consider another approach. Firstly, we exploit the formula obtained in case of maximum
likelihood estimator. This implies that $\lambda$ is set to be zero which means the penalty term that makes 
not differentiable the function does not enter in the calculation. Secondly, we rely on the assumption of sparsistency to derive GIC for penalized GCGMs.


%

\subsection{GIC for maximum likelihood GCGMs}
In this section we derive the formula of GIC given in (\ref{eq:GICgeneral1}) for the maximum likelihood 
GCGMs. We need to calculate the term $\tr(\mbox{R}^{-1}\mbox{Q})$ where $\mbox{R}$ and $\mbox{Q}$ are given
by (\ref{RQ}). Log-likelihood of one observation for a Gaussian copula model is up to an additive constant
 $l_k(\Omega; q_k)=\frac12\{\log|\Omega|-\tr(\Omega \tilde{\mbox{S}}_k)\}$,
where $\tilde{\mbox{S}}_k = q_k q_k^{\top}$.  Using matrix calculus \cite[pp.201,203]{Magnus07} we obtain
\begin{equation}
\label{eq:score}
\sD  l_{k}(\Omega)= \psi(q_{k},\Omega)=\frac12 \vect(\Omega^{-1} - \tilde{\mbox{S}}_{k})^{\top},
\end{equation}
and
\begin{equation}
\label{eq:term3}
\{\sD \psi(q_k;\Omega)\}^{\top}=-\frac12\Omega^{-1}\otimes \Omega^{-1}.
\end{equation}
From (\ref{eq:M-estimator})
\begin{equation}
\label{eq:m-estimator} 
\sum_{k=1}^N \psi(q_k,\Omegahat)=\frac12\sum_{k=1}^N \vect \big(\Omegahat^{-1}-\tilde{\mbox{S}}_{k}\big)^{\top}=0.
\end{equation}
It follows that from (\ref{eq:m-estimator})
\begin{equation}
\label{eq:qnew1} 
\vect\Omegahat^{-1}=\vect \tilde{\mbox{S}}.
\end{equation}

By using  (\ref{eq:score}), (\ref{eq:term3}) we obtain 
$$\mbox{Q} =-\frac{1}{4n}\sum_{k=1}^n \vect\big(\Omegahat^{-1}-\tilde{\mbox{S}}_k\big) \vect(\tilde{\mbox{S}}_k)^{\top},$$
and by applying (\ref{eq:qnew1}) we have 
$$\mbox{Q}= \frac{1}{4n}\sum_{k=1}^n \vect \tilde{\mbox{S}}_k \vect(\tilde{\mbox{S}}_k)^{\top} - \frac{1}{4} \vect \tilde{\mbox{S}} \vect(\tilde{\mbox{S}})^{\top}.$$

We obtain expression for $\mbox{R}^{-1}$ from (\ref{eq:term3})
\begin{equation}
\label{Rinv}
\mbox{R}^{-1}=2\Omega\otimes \Omega.
\end{equation}

Now, it follows that 
\begin{equation}
\label{eqgic:mle1}
\tr(\mbox{R}^{-1}\mbox{Q})=\frac{1}{2n}\sum_{k=1}^n\tr\{(\Omegahat\otimes \Omegahat) \vect \tilde{\mbox{S}}_{k}\vect \tilde{\mbox{S}}_{k}^{\top}\}- \frac{1}{2}\tr\{(\Omegahat\otimes \Omegahat) \vect \tilde{\mbox{S}}\vect \tilde{\mbox{S}}^{\top}\},
\end{equation}

which gives the bias term in case of maximum likelihood for GCGMs.
 Here, $\otimes$ is the Kronecker product of matrices and $\vect$ is the vectorization operator which transforms a matrix into a column vector obtained by stacking the columns of the matrix on top of one another.

\subsection{GIC for penalized maximum likelihood GCGMs}

We, firstly, write (\ref{eqgic:mle1}) in a different form to obtain GIC for penalized maximum likelihood GCGMs. By using the identity $\tr(\mbox{A}\mbox{B})=\tr(\mbox{B}\mbox{A})$ and noting that for vector $x$ and matrix $\mbox{A}$ holds $\tr(\mbox{A}xx^{\top})=x^{\top}\mbox{A}x$ we obtain that formula (\ref{eqgic:mle1}) can be written as 
\begin{equation}
\nonumber
\tr(\mbox{R}^{-1}\mbox{Q})=\frac{1}{2n}\sum_{k=1}^n \vect \tilde{\mbox{S}}_{k}^{\top}(\Omegahat\otimes \Omegahat)
\vect \tilde{\mbox{S}}_{k}- \frac{1}{2}\vect \tilde{\mbox{S}}^{\top}(\Omegahat\otimes \Omegahat)\vect \tilde{\mbox{S}}.
\end{equation}
We can write the formula as
\begin{equation}
\label{gic:mle2}
\tr(\mbox{R}^{-1}\mbox{Q})=\frac{1}{2n}\sum_{k=1}^n T(\tilde{\mbox{S}}_k)- \frac{1}{2}T(\tilde{\mbox{S}}),
\end{equation}
where 
$$T(\mbox{A})=\vect \mbox{A}^{\top}(\Omegahat\otimes \Omegahat) \vect \mbox{A},$$
and $\mbox{A}$ is a $d\times d$ matrix. 

\begin{lemma}
\label{lemma1}
Let $\mathrm{A}$ and $\Omega$ be a symmetric matrices of order $d$. Then the following identity holds
$$(\Omega\otimes\Omega)\vect(\mathrm{A})=\mbox{M}_{d}(\Omega\otimes\Omega)\vect(\mathrm{A}),$$
where $\mathrm{M}_{d}= \frac{1}{2}(\mathrm{I}_{d^2}+\mathrm{K}_{d})$, and $\mathrm{I}_{d^2}$ and $\mathrm{K}_{d}$ are identity matrix and commutation matrix of order $d^2$, respectively.
\end{lemma}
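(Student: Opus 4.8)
The plan is to reduce the stated equality to the single fact that the vector $(\Omega\otimes\Omega)\vect(\mathrm{A})$ is left unchanged by the commutation matrix $\mathrm{K}_d$; once that is in place, $\mathrm{M}_d$ acts as the identity on this vector and the lemma follows at once. First I would recall the two defining properties of $\mathrm{K}_d$ (see \cite{Magnus07}): for any $d\times d$ matrix $\mathrm{B}$ one has $\mathrm{K}_d\vect(\mathrm{B})=\vect(\mathrm{B}^{\top})$, and for any $d\times d$ matrices $\mathrm{B},\mathrm{C}$ the intertwining relation $\mathrm{K}_d(\mathrm{B}\otimes\mathrm{C})=(\mathrm{C}\otimes\mathrm{B})\mathrm{K}_d$ holds.

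The key step is then the identity $\mathrm{K}_d(\Omega\otimes\Omega)\vect(\mathrm{A})=(\Omega\otimes\Omega)\vect(\mathrm{A})$. Setting $\mathrm{B}=\mathrm{C}=\Omega$ in the intertwining relation gives $\mathrm{K}_d(\Omega\otimes\Omega)=(\Omega\otimes\Omega)\mathrm{K}_d$, so that
\[
\mathrm{K}_d(\Omega\otimes\Omega)\vect(\mathrm{A})=(\Omega\otimes\Omega)\mathrm{K}_d\vect(\mathrm{A})=(\Omega\otimes\Omega)\vect(\mathrm{A}^{\top})=(\Omega\otimes\Omega)\vect(\mathrm{A}),
\]
where the last equality uses the symmetry $\mathrm{A}^{\top}=\mathrm{A}$. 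Equivalently, I could invoke $\vect(\Omega\mathrm{A}\Omega)=(\Omega\otimes\Omega)\vect(\mathrm{A})$ together with the observation that $\Omega\mathrm{A}\Omega$ is symmetric whenever $\Omega$ and $\mathrm{A}$ are, hence a fixed point of $\mathrm{K}_d$; either route gives the same conclusion.

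Finally I would substitute this into the definition of $\mathrm{M}_d=\tfrac12(\mathrm{I}_{d^2}+\mathrm{K}_d)$:
\[
\mathrm{M}_d(\Omega\otimes\Omega)\vect(\mathrm{A})=\tfrac12\big[(\Omega\otimes\Omega)\vect(\mathrm{A})+\mathrm{K}_d(\Omega\otimes\Omega)\vect(\mathrm{A})\big]=(\Omega\otimes\Omega)\vect(\mathrm{A}),
\]
which is exactly the claimed identity. The only genuine content here—everything else being bookkeeping with $\mathrm{K}_d$—is the recognition that the joint symmetry of $\mathrm{A}$ and $\Omega$ is precisely what makes $(\Omega\otimes\Omega)\vect(\mathrm{A})$ invariant under $\mathrm{K}_d$; I anticipate no real obstacle beyond stating the two commutation-matrix identities correctly.
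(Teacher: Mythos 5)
Your proof is correct, and its primary route is a genuine (if mild) variant of the paper's. The paper first collapses $(\Omega\otimes\Omega)\vect(\mathrm{A})$ to $\vect(\Omega\mathrm{A}\Omega)$ via the identity $\vect(\mathrm{A}\mathrm{B}\mathrm{C})=(\mathrm{C}^{\top}\otimes\mathrm{A})\vect(\mathrm{B})$ (which uses the symmetry of $\Omega$), notes that $\Omega\mathrm{A}\Omega$ is itself symmetric (which uses the symmetry of both matrices), and concludes that $\mathrm{K}_d$ fixes its vectorization, so that $\mathrm{M}_d$ acts as the identity on it; the ``equivalently'' route you sketch at the end of your second paragraph is precisely this argument. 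Your main route instead commutes $\mathrm{K}_d$ past the Kronecker product via the intertwining relation $\mathrm{K}_d(\mathrm{B}\otimes\mathrm{C})=(\mathrm{C}\otimes\mathrm{B})\mathrm{K}_d$ and then applies $\mathrm{K}_d\vect(\mathrm{A})=\vect(\mathrm{A}^{\top})=\vect(\mathrm{A})$. Both proofs rest on the same fixed-point observation---that $(\Omega\otimes\Omega)\vect(\mathrm{A})$ is invariant under $\mathrm{K}_d$---but the two mechanisms differ in what they require: your intertwining argument uses only the symmetry of $\mathrm{A}$, since $\mathrm{K}_d(\Omega\otimes\Omega)=(\Omega\otimes\Omega)\mathrm{K}_d$ holds for arbitrary square $\Omega$, so your version of the lemma is slightly more general than stated; the paper's argument needs both symmetries but stays entirely within the two identities it has already quoted and avoids introducing the intertwining property as an extra ingredient.
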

\begin{proof}
Commutation matrix $\mbox{K}_{d}$ is defined as a matrix that has the property $\mbox{K}_{d} \vect(\mbox{A}) = \vect(\mbox{A})^{\top}$. By substituting $\mbox{M}_{d} = \frac{1}{2}(\mbox{I}_{d^2}+\mbox{K}_{d})$ in the equality we obtain that it is equivalent to 
$$(\Omega\otimes\Omega)\vect(\mbox{A})=\mbox{M}_{d}(\Omega\otimes\Omega)\vect(\mbox{A}).$$
To show this, we use identities  $\vect(\mbox{A}\mbox{B}\mbox{C})=(\mbox{C}^{\top}\otimes \mbox{A})\vect \mbox{B}$, $\mbox{K}_{d} \vect(\mbox{A}) = \vect(\mbox{A})^{\top}$ and symmetry of $\mbox{A}$ and $\Omega$ 
$$\mbox{M}_{d}(\Omega\otimes \Omega) \vect \mbox{A} = \mbox{K}_{d} \vect(\Omega \mbox{A}\Omega) = \vect(\Omega \mbox{A}\Omega)^{\top} = \vect(\Omega \mbox{A}\Omega) = 
(\Omega\otimes \Omega) \vect \mbox{A}.$$
\end{proof}

According to Lemma 1 it follows that 
$$T(\mbox{A})=\vect \mbox{A}^{\top} \mbox{M}_{d}(\Omegahat\otimes \Omegahat)\vect \mbox{A}.$$

Since $\mbox{M}_d(\Omegahat\otimes\Omegahat)$ is an estimate of the asymptotic covariance matrix of $\Omegahat$ \citep{fried2009robust}, we use an asymptotic argument to propose an estimate in case of
MPLE. To obtain the formula for the penalized estimator we assume standard conditions like in \citep{lam2009sparsistency} that guarantees sparsistent shrinkage estimator. These conditions 
imply that $\lambda\rightarrow 0$ when $n\rightarrow \infty$, so we can use formula (\ref{gic:mle2}), derived for the maximum likelihood case, as an approximation in the penalized case. By sparsistency with probability one
 the zero coefficients will be estimated as zero when $n$ tends to infinity. This means that asymptotically the covariances between zero elements and nonzero elements are equal to zero. Thus, to obtain the term $T(\mbox{A})$ for
 the shrinkage estimator we do not only consider the expression $\Omegahat_{\lambda}$ in formula (\ref{gic:mle2}), but we also set the elements of the matrix $\mbox{M}_{d}(\Omegahat_{\lambda}\otimes\Omegahat_{\lambda})$ 
 corresponding to covariances between zero and nonzero elements to zero.

 \begin{lemma}
\label{lemma2}
Let $\mathrm{A}$ be a symmetric matrix of order $d$ and $x,y$ any vectors of dimension $d$. Then setting $i$-th row (column) of matrix $\mathrm{A}$ to zero in the bilinear form 
$$x^{\top}\mathrm{A}y,$$
is equivalent to setting $i$-th entry of vector $x$ ($y$) to zero.
\end{lemma}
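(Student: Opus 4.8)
The plan is to prove the statement by directly expanding the bilinear form into its scalar components and tracking which term each operation removes. Writing $x=(x_1,\dots,x_d)^{\top}$, $y=(y_1,\dots,y_d)^{\top}$ and $\mathrm{A}=(A_{jk})$, I would start from
\[
x^{\top}\mathrm{A}y=\sum_{j=1}^{d}\sum_{k=1}^{d}x_j A_{jk}y_k,
\]
so that the row index $j$ is coupled to the entries of $x$ and the column index $k$ to the entries of $y$. Everything then follows from comparing finite sums.

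For the row case, let $\tilde{\mathrm{A}}$ be the matrix obtained from $\mathrm{A}$ by replacing its $i$-th row with zeros, i.e. $\tilde A_{ik}=0$ for all $k$ and $\tilde A_{jk}=A_{jk}$ otherwise. Substituting into the double sum removes exactly the terms with $j=i$, giving $x^{\top}\tilde{\mathrm{A}}y=\sum_{j\neq i}\sum_{k}x_j A_{jk}y_k$. On the other hand, if $\tilde x$ denotes $x$ with its $i$-th entry set to zero, then $\tilde x^{\top}\mathrm{A}y=\sum_{j\neq i}\sum_{k}x_j A_{jk}y_k$ as well, because the only contribution of the $i$-th row of $\mathrm{A}$ to the form is through the factor $x_i$. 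The two expressions coincide term by term, which proves the row--versus--$x$ equivalence. The column case is the mirror image: zeroing the $i$-th column of $\mathrm{A}$ removes precisely the terms with $k=i$, and these are exactly the terms carrying the factor $y_i$, so it agrees with setting the $i$-th entry of $y$ to zero.

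Since each of the two claims reduces to a direct comparison of finite sums, I do not expect a genuine obstacle; the only point requiring care is the bookkeeping of which index is coupled to which vector (row $\leftrightarrow x$ through $j$, column $\leftrightarrow y$ through $k$). The symmetry of $\mathrm{A}$ is not strictly needed for either individual equivalence, but it is what makes the two operations compatible in the intended application: for a symmetric $\mathrm{A}$, zeroing the $i$-th row together with the $i$-th column is a single symmetry-preserving operation, and in a quadratic form with $x=y$ it corresponds to setting the $i$-th component to zero in both copies of the vector (the shared diagonal entry $A_{ii}$ being removed consistently by either description). I would close by making this consistency explicit, so that the lemma applies directly to $T(\mathrm{A})=\vect \mathrm{A}^{\top}\mbox{M}_d(\Omegahat\otimes\Omegahat)\vect \mathrm{A}$, where zeroing in the middle matrix the covariances between zero and nonzero elements is thereby identified with zeroing the corresponding entries of $\vect \mathrm{A}$.
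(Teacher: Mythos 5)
Your proof is correct and is essentially the paper's own argument: the paper disposes of this lemma with the single line ``obtained by straightforward calculation,'' and your expansion of $x^{\top}\mathrm{A}y=\sum_{j,k}x_jA_{jk}y_k$ with the index bookkeeping (row $\leftrightarrow$ $x$, column $\leftrightarrow$ $y$) is exactly that calculation spelled out. Your side remark that symmetry of $\mathrm{A}$ is not needed for either equivalence separately, but only for the consistency of zeroing a row and column simultaneously in the quadratic-form application, is accurate and a useful clarification the paper omits.
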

\begin{proof}
The result is obtained by straightforward calculation.
\end{proof}

According to Lemma \ref{lemma2} this is equivalent to setting the corresponding entries of vectors 
$\vect A$ to zero,
\begin{equation}
\label{T} 
T_{\lambda}(\mbox{A})=\vect (\mbox{A}\circ \mathrm{I}_{\lambda})^{\top} \mbox{M}_{d}(\Omegahat_{\lambda}\otimes\Omegahat_{\lambda})\vect (\mbox{A}\circ \mathrm{I}_{\lambda})=\vect (\mbox{A}\circ \mbox{I}_{\lambda})^{\top} (\Omegahat_{\lambda}\otimes\Omegahat_{\lambda})\vect (\mbox{A}\circ \mbox{I}_{\lambda}),
\end{equation}
where $\circ$ is the Schur or Hadamard product of matrices and $\mathrm{I}_{\lambda}$ is the indicator matrix, whose entry is $1$ if the corresponding entry in the precision matrix $\Omegahat_{\lambda}$ is nonzero and zero if the corresponding entry in the precision matrix is zero.
The second equality in (\ref{T}) follows from Lemma 1. By substituting $T_{\lambda}(\tilde{\mbox{S}})$ and $T_{\lambda}(\tilde{\mbox{S}}_k)$ instead of  
$T(\tilde{\mbox{S}})$ and $T(\tilde{\mbox{S}}_k)$ in (\ref{gic:mle2}) we obtain:
\begin{equation}
\label{gic:mple}
\tr(\mbox{R}^{-1}\mbox{Q})=\frac{1}{2n}\sum_{k=1}^n T_{\lambda}(\mbox{S}_k)- \frac{1}{2}T_{\lambda}(\mbox{S}).
\end{equation}
Implementation of  this formula is computationally too expensive so we rewrite it in a different way.



\subsection{Efficient calculation of degrees of freedom for GCGMs}
\label{sec:egic}

By using the identity $\vect(\mbox{ABC})=(\mbox{C}^{\top}\otimes \mbox{A})\vect \mbox{B}$ we obtain 
\begin{equation}
\label{efficient}
T_{\lambda}(\mbox{A})= \vect (\mbox{A}\circ \mbox{I}_{\lambda})^{\top} (\Omegahat_{\lambda}\otimes\Omegahat_{\lambda})\vect (\mbox{A}\circ \mbox{I}_{\lambda})= \vect (\mbox{A}\circ \mbox{I}_{\lambda})^{\top}\vect\{\Omegahat_{\lambda}(\mbox{A}\circ \mbox{I}_{\lambda})\Omegahat_{\lambda}\}
\end{equation}

From (\ref{efficient}) we propose the following estimator of the Kullback-Leibler divergence of the penalized Gaussian copula model to the true distribution  
\begin{equation}
\label{eq:GICgeneral}
\GIC(\lambda) = -2l(\Omegahat_{\lambda})+2\widehat{\df}_{\GIC},
\end{equation}
where
\begin{equation}
\label{dfgic}
 \widehat{\df}_{\GIC} =  \frac{1}{2n}\sum_{k=1}^n \vect(\tilde{\mbox{S}}_k\circ \mathrm{I}_{\lambda})^{\top}\vect\{\Omegahat_{\lambda}(\tilde{\mbox{S}}_k\circ \mathrm{I}_{\lambda})\Omegahat_{\lambda}\}
- \frac{1}{2}\vect(\tilde{\mbox{S}}\circ \mathrm{I}_{\lambda})^{\top}\vect\{\Omegahat_{\lambda}(\tilde{\mbox{S}}\circ \mbox{I}_{\lambda})\Omegahat_{\lambda}\},
\end{equation}

In the case of maximum likelihood estimator, i.e. when $\lambda=0$, with probability one every element of matrix $\mbox{I}_{\lambda}$ is equal to one so we obtain the same formula as in \ref{gic:mle2}. The calculation of GIC in (\ref{eq:GICgeneral}) is feasible for high-dimensional data. The penalized estimator is derived under the assumption of sparsistency of the estimator, while for maximum likelihood estimator we do not need any assumption. To estimate the tuning parameter $\lambda$, we consider a set of $r$ positive values and we choose the value that minimizes (\ref{eq:GICgeneral}).  
The bias term estimator in (\ref{dfgic}) can be thought as a measure of model complexity and used as a degrees of freedom in other information criteria. Although its use in this way is not formally justified, as it is for the estimate of the Kulback-Leibler divergence, simulations in Section \ref{sec:simulations} show that this can be useful to improve the performance of the KL-estimator. For example,
we define GBIC as follows:

\begin{equation}
\label{gbic}
\GIC(\lambda) = -2l(\Omegahat_{\lambda})+\mbox{log}(n)\widehat{\df}_{\GIC}.
\end{equation}

\section{Numerical results}
\label{sec:sim}

\subsection{Simulation study}
\label{sec:simulations}

We perform a simulation study to evaluate the performance of the proposed estimator in comparison with other estimators proposed in the literature. In particular, we evaluate the performance of 
the proposed estimator which we called GIC (\ref{eq:GICgeneral}), two fold cross validation (CV), the Akaike's information criterion (AIC), the Schwarz's information criterion (BIC),  the Schwarz's information criterion adjusted with 
degrees of freedom in formula (\ref{dfgic}) (GBIC (\ref{gbic})). Moreover, we compute the performance of the Kullback-Leibler oracle which is calculated from formula (\ref{eq:dfkl}). These estimators are evaluated with respect to: 
\begin{itemize}
\item[i)] estimation quality;
\item[ii)] support recovery of the precision matrix.
\end{itemize}

The estimation quality is measured by the Kullback-Leibler loss function:
 $$\mbox{KL}(\Omega, \hat{\Omega}) = 
 \mbox{tr}(\Omega^{-1} \hat{\Omega}) - \mbox{log}|\Omega^{-1} \hat{\Omega}| - p,$$
where $\Omega^{-1} =\Sigma$ is the true variance covariance matrix.
Moreover, the estimation quality is measured by the following matrix norms: the operator norm, the matrix $\ell_1$ norm, the Frobenius norm. Obviously, the smaller is the loss, the better the quality of the estimator.

The support recovery of the precision matrix is evaluated by the following scores:

$$\mbox{Specificity} = \frac{\mbox{TN}}{\mbox{TN}+\mbox{FP}}, \qquad \qquad \mbox{\mbox{Sensitivity}} = \frac{\mbox{TP}}{\mbox{TP}+\mbox{FN}},$$
and
$$\mbox{MCC} = \frac{(\mbox{TP}\times \mbox{TN}) - (\mbox{FN}\times \mbox{FP})}{\sqrt{(\mbox{TP}\times \mbox{TN})(\mbox{FN}\times \mbox{FP})}},$$
where 
$\mbox{TP}$ are the true positives, $\mbox{FP}$ are the false positives, $\mbox{TN}$ are the true negatives and 
$\mbox{FN}$ are the false negatives. The larger the score value, the better the classification performance.

The cross validation is implemented as follows. We divided the sample in two parts with the same numbers of elements, namely the training sample and the validation sample. We used the training data to compute a series of estimators with 200 different values of $\lambda$ and selected the one with the smallest likelihood loss on the validation sample, where the likelihood loss is defined by
$$L(\Sigma, \hat{\Omega}_\lambda) = \mbox{tr}(\hat{\Omega}_\lambda \Sigma) - \mbox{log}|\hat{\Omega}_\lambda|.$$

We consider three models as follows:
\begin{itemize}
\item Model 1. A banded graph with bandwidth equal to 4;
\item Model 2. A random graph where each edge is present with probability $3/d$, where $d$ is the number of variables;
\item Model 3. A random graph where each edge is present with probability $0.9$.
\end{itemize}

Model 1 has a banded structure where the values of the entries between $node_i$ and $node_{i + 5}$ are zeroes.
Model 2 is an example of sparse matrix without any special pattern. Model 3 serves as a dense matrix example. For each model, we generate a sample of size $n = 100$ from a multivariate normal distribution.
We compute the glasso, adaptive lasso and scad estimators of these data. We consider different values of 
$d = 30, 60, 90, 120$ and 100 replicates. The simulation is conducted for three penalized Gaussian Graphical models: glasso, adaptive lasso and scad. Results are shown in the appendix (\ref{app3}).

These tables show how the norm changes when we select the tuning parameters by minimizing one of the following measure: GIC, CV, AIC, BIC, GBIC, KL-oracle. The latter measure is computed by formula (\ref{eq:dfkl}). This can be refereed to be the true bias term correction
for Gaussian graphical model. It is necessary to know the true structure of the graph to compute this 
estimate. Since our estimator is based on the minimization of KL, we expect similar performances between KL-oracle and GIC. This can be seen in all the Tables we report. This should also be the case for AIC if the bias correction term is adequate. Our simulations show that AIC and CV are not adequate when $p$ tends to $n$. Similar simulation studies have been conducted in \cite{cai2011constrained}, \cite{fan2009network}, \cite{rothman2008sparse}. The main difference is that in this paper we are concerned about selection of the tuning parameter whereas in the just cited papers there are proposal of new estimators of the precision matrix. The results can be summarized as follows: i) GIC performs overall better in terms of KL-loss function;
ii) there is no clear pattern with the other norm loss; iii)  the degrees of freedom as proposed in formula (\ref{dfgic}) improves the performance of BIC, i.e. GBIC performance is, generally, better than BIC. 
Our method performs better for dense graphs in terms of specificity when used in combination with adaptive lasso. Note that for dense graphs we report only specificity since other two measures are not useful due to the structure of the graph. In this case we have indicated Sensitivity and MCC with N/A.

\section{Discussion}
High-dimensional data arises in many fields of science. The study of the structure, in terms of conditional independence graph, can be carried out with penalized Gaussian copula graphical models. 
These models allow us to visualize with a graph the conditional independence among the set of random variables. 
These models make use of the sparsity assumption, i.e. many parameters are not statistically significant. In this setting, we proposed an estimator of the  Kullback-Leibler loss function which can be used for the main penalized likelihood approaches namely glasso, adaptive lasso and scad in order to model selection in GCGMs. We select the tuning parameter that minimizes the GIC. In a simulation study, we showed the performance of the proposed estimator of the bias term can be used with other information criteria  (such as BIC) to improve their performance in terms of graph selection.

\appendix

     \section{Simulation study results}
\label{app3}

\renewcommand{\arraystretch}{1.2}
\begin{landscape}
\begin{table}[!ht]
\centering
\begin{tabular}{r@{\hskip 0.5in} r@{ (}r@{) \hskip 0.3in}r@{ (}r@{) \hskip 0.3in}r@{ (}r@{) \hskip 0.3in}r@{ (}r@{) \hskip 0.3in}r@{ (}r@{) \hskip 0.3in}r@{ (}r@{) \hskip 0.3in}}
  &\multicolumn{2}{c}{\itshape gic} &  \multicolumn{2}{c}{\itshape cv}& \multicolumn{2}{c}{\itshape aic}& \multicolumn{2}{c}{\itshape bic} & \multicolumn{2}{c}{\itshape gbic} & \multicolumn{2}{c}{\itshape kl oracle}\\ 
  \hline
&\multicolumn{12}{c}{}\\
p&\multicolumn{12}{c}{\textbf{ Kullback Leibler loss function}}\\
&\multicolumn{12}{c}{}\\
  30 & 1.82 & 0.13 & 1.72 & 0.12 & $\bold{1.60}$ & 0.10 & 3.19 & 0.77 & 2.94 & 0.39 & 1.58 & 0.09 \\ 
  60 & $\bold{4.14}$ & 0.17 & 4.21 & 0.25 & 4.40 & 0.26 & 7.71 & 0.35 & 6.49 & 0.53 & 3.89 & 0.16 \\ 
  90 & $\bold{6.78}$ & 0.24 & 6.98 & 0.33 & 8.20 & 0.70 & 11.72 & 0.40 & 10.43 & 0.60 & 6.51 & 0.22 \\ 
  120 & $\bold{9.42}$ & 0.26 & 9.84 & 0.37 & 11.37 & 1.96 & 15.89 & 0.38 & 14.30 & 0.68 & 9.21 & 0.23 \\ 
  &\multicolumn{12}{c}{}\\
  &\multicolumn{12}{c}{\textbf{Operator norm}}\\
  &\multicolumn{12}{c}{}\\
  30  & 3.87 & 0.11 & 3.75 & 0.14 & $\bold{3.30}$ & 0.12 & 4.41 & 0.29 & 4.38 & 0.12 & 3.40 & 0.12 \\ 
  60  & 3.80 & 0.06 & 3.82 & 0.09 & $\bold{3.16}$ & 0.11 & 4.43 & 0.04 & 4.27 & 0.07 & 3.56 & 0.06 \\ 
  90  & 3.82 & 0.05 & 3.87 & 0.07 & $\bold{3.17}$ & 0.16 & 4.38 & 0.03 & 4.28 & 0.05 & 3.65 & 0.05 \\ 
  120  & 3.83 & 0.05 & 3.91 & 0.06 & $\bold{3.45}$ & 0.38 & 4.37 & 0.02 & 4.28 & 0.04 & 3.71 & 0.03 \\ 
  &\multicolumn{12}{c}{}\\
  &\multicolumn{12}{c}{\textbf{Matrix $\ell_1$ norm}}\\
  &\multicolumn{12}{c}{}\\
   30 & 4.65 & 0.16 & 4.64 & 0.17 & $\bold{4.60}$ & 0.23 & 4.82 & 0.13 & 4.81 & 0.08 & 4.59 & 0.22 \\ 
   60 & 4.66 & 0.11 & 4.64 & 0.11 & 5.26 & 0.23 & 4.53 & 0.02 & $\bold{4.51}$ & 0.04 & 4.82 & 0.15 \\ 
   90 & 4.79 & 0.11 & 4.73 & 0.11 & 5.97 & 0.38 & $\bold{4.46}$ & 0.02 & $\bold{4.46}$ & 0.03 & 5.01 & 0.12 \\ 
   120 & 4.90 & 0.13 & 4.77 & 0.12 & 5.82 & 0.93 & $\bold{4.44}$ & 0.02 & $\bold{4.44}$ & 0.03 & 5.10 & 0.13 \\ 
   &\multicolumn{12}{c}{}\\
   &\multicolumn{12}{c}{\textbf{Frobenius norm}}\\
   &\multicolumn{12}{c}{}\\
  30 & 6.65 & 0.19 & 6.42 & 0.25 & $\bold{5.66}$ & 0.15 & 7.84 & 0.60 & 7.75 & 0.25 & 5.80 & 0.17 \\ 
  60 & 9.42 & 0.15 & 9.50 & 0.24 & $\bold{8.02}$ & 0.14 & 11.44 & 0.14 & 10.94 & 0.22 & 8.78 & 0.10 \\ 
  90 & 11.77 & 0.17 & 11.96 & 0.25 & $\bold{10.17}$ & 0.31 & 14.03 & 0.13 & 13.61 & 0.20 & 11.18 & 0.11 \\ 
  120 & 13.68 & 0.18 & 14.06 & 0.23 & $\bold{12.80}$ & 0.92 & 16.28 & 0.10 & 15.83 & 0.19 & 13.21 & 0.09 \\ 
   &\multicolumn{12}{c}{}\\
   \hline
\end{tabular}
\caption{Comparison of average (SE) matrix losses for MODEL 1 which is a BAND GRAPH
where GLASSO has been used to estimate the precision matrix 
over 100 replications \label{tab:bandgraph_norm} }
\end{table}
\end{landscape}

\renewcommand{\arraystretch}{1.2}
\begin{landscape}
\begin{table}[!ht]
\centering
\begin{tabular}{r@{\hskip 0.5in} r@{ (}r@{) \hskip 0.3in}r@{ (}r@{) \hskip 0.3in}r@{ (}r@{) \hskip 0.3in}r@{ (}r@{) \hskip 0.3in}r@{ (}r@{) \hskip 0.3in}r@{ (}r@{) \hskip 0.3in}}
  &\multicolumn{2}{c}{\itshape gic} &  \multicolumn{2}{c}{\itshape cv}& \multicolumn{2}{c}{\itshape aic}& \multicolumn{2}{c}{\itshape bic} & \multicolumn{2}{c}{\itshape gbic} & \multicolumn{2}{c}{\itshape kl oracle}\\ 
  \hline
&\multicolumn{12}{c}{}\\
p&\multicolumn{12}{c}{\textbf{Specificity}}\\
&\multicolumn{12}{c}{}\\
30 & 43.41 & 2.88 & 40.99 & 2.73 & 36.45 & 1.51 & $\bold{68.81}$ & 17.30 & 62.45 & 8.01 & 37.20 & 1.48 \\ 
  60 & 31.03 & 2.05 & 31.65 & 2.52 & 21.71 & 0.85 & $\bold{83.55}$ & 8.34 & 63.46 & 8.42 & 25.73 & 0.84 \\ 
  90 & 25.53 & 1.61 & 27.31 & 2.85 & 16.01 & 1.64 & $\bold{82.72}$ & 6.93 & 65.88 & 7.83 & 21.06 & 0.61 \\ 
  120 & 21.67 & 1.40 & 25.05 & 2.70 & 17.28 & 5.63 & $\bold{83.80}$ & 5.55 & 66.73 & 8.06 & 18.46 & 0.47 \\ 
  &\multicolumn{12}{c}{}\\
  &\multicolumn{12}{c}{\textbf{Sensitivity}}\\
  &\multicolumn{12}{c}{}\\
  30  & 67.25 & 3.46 & 70.80 & 4.44 & $\bold{80.23}$ & 3.01 & 30.75 & 21.43 & 40.46 & 11.45 & 78.84 & 3.51 \\ 
  60  & 63.10 & 2.76 & 62.32 & 3.42 & $\bold{77.94}$ & 2.56 & 11.45 & 6.30 & 32.20 & 8.75 & 70.06 & 2.51 \\ 
  90  & 60.77 & 2.29 & 58.72 & 3.33 & $\bold{75.95}$ & 3.78 & 10.09 & 4.90 & 25.37 & 7.14 & 66.43 & 1.84 \\ 
  120  & 59.65 & 1.93 & 56.09 & 2.83 & $\bold{67.51}$ & 8.62 & 8.89 & 3.47 & 22.99 & 6.24 & 63.80 & 1.66 \\ 
  &\multicolumn{12}{c}{}\\
  &\multicolumn{12}{c}{\textbf{MCC}}\\
  &\multicolumn{12}{c}{}\\
  30 & 33.34 & 3.85 & 31.66 & 3.56 & 28.72 & 2.99 & 30.76 & 9.24 & $\bold{36.77}$ & 5.65 & 29.43 & 3.09 \\ 
   60 & 32.08 & 2.29 & 32.42 & 2.37 & 24.21 & 1.66 & 26.97 & 7.07 & $\bold{38.91}$ & 3.95 & 28.01 & 1.53 \\ 
   90 & 30.42 & 1.67 & 31.47 & 2.17 & 21.50 & 1.93 & 26.22 & 5.77 & $\bold{36.98}$ & 4.03 & 26.91 & 1.06 \\ 
   120 & 28.57 & 1.38 & 30.81 & 1.96 & 23.95 & 5.23 & 25.60 & 4.69 & $\bold{36.20}$ & 3.55 & 25.99 & 0.87 \\ 
   &\multicolumn{12}{c}{}\\
   \hline
\end{tabular}
\caption{Comparison of average (SE) support recovery for MODEL 1 which is a BAND GRAPH
where GLASSO has been used to estimate the precision matrix 
over 100 replications \label{tab:bandgraph} }
\end{table}
\end{landscape}

\renewcommand{\arraystretch}{1.2}
\begin{landscape}
\begin{table}[!ht]
\centering
\begin{tabular}{r@{\hskip 0.5in} r@{ (}r@{) \hskip 0.3in}r@{ (}r@{) \hskip 0.3in}r@{ (}r@{) \hskip 0.3in}r@{ (}r@{) \hskip 0.3in}r@{ (}r@{) \hskip 0.3in}r@{ (}r@{) \hskip 0.3in}}
  &\multicolumn{2}{c}{\itshape gic} &  \multicolumn{2}{c}{\itshape cv}& \multicolumn{2}{c}{\itshape aic}& \multicolumn{2}{c}{\itshape bic} & \multicolumn{2}{c}{\itshape gbic} & \multicolumn{2}{c}{\itshape kl oracle}\\ 
  \hline
&\multicolumn{12}{c}{}\\
p&\multicolumn{12}{c}{\textbf{ Kullback Leibler loss function }}\\
&\multicolumn{12}{c}{}\\
30  & 1.88 & 0.13 & 2.66 & 0.26 & $\bold{1.86}$ & 0.13 & 4.32 & 0.10 & 2.51 & 0.27 & 1.85 & 0.13 \\ 
  60 & $\bold{4.08}$ & 0.17 & 6.32 & 0.33 & 6.37 & 2.01 & 8.50 & 0.15 & 5.35 & 0.46 & 4.03 & 0.16 \\ 
  90 & $\bold{6.51}$ & 0.22 & 10.17 & 0.27 & 12.23 & 0.15 & 12.85 & 0.20 & 8.25 & 0.55 & 6.50 & 0.22 \\ 
   120 &$\bold{9.20}$ & 0.28 & 13.93 & 0.32 & 16.38 & 0.19 & 17.21 & 0.22 & 11.42 & 0.65 & 9.17 & 0.28 \\ 
  &\multicolumn{12}{c}{}\\
  &\multicolumn{12}{c}{\textbf{Operator norm}}\\
  &\multicolumn{12}{c}{}\\
  30  & 3.70 & 0.10 & 4.18 & 0.13 & $\bold{3.67}$ & 0.11 & 4.73 & 0.02 & 4.10 & 0.14 & 3.66 & 0.11 \\ 
  60  & $\bold{3.57}$ & 0.08 & 4.20 & 0.07 & 4.07 & 0.48 & 4.51 & 0.01 & 4.00 & 0.11 & 3.53 & 0.08 \\ 
  90  & $\bold{3.55}$ & 0.07 & 4.23 & 0.03 & 4.42 & 0.01 & 4.46 & 0.01 & 3.98 & 0.08 & 3.54 & 0.07 \\ 
  120  & $\bold{3.55}$ & 0.06 & 4.24 & 0.03 & 4.40 & 0.01 & 4.44 & 0.01 & 4.01 & 0.07 & 3.57 & 0.06 \\ 
  &\multicolumn{12}{c}{}\\
  &\multicolumn{12}{c}{\textbf{Matrix $\ell_1$ norm}}\\
  &\multicolumn{12}{c}{}\\
   30 & 4.53 & 0.18 & 4.70 & 0.11 & $\bold{4.51}$ & 0.19 & 4.97 & 0.02 & 4.67 & 0.13 & 4.51 & 0.19 \\ 
   60 & 4.50 & 0.17 & $\bold{4.42}$ & 0.04 & 4.53 & 0.12 & 4.56 & 0.01 & $\bold{4.42}$ & 0.09 & 4.51 & 0.17 \\ 
   90 & 4.65 & 0.14 & $\bold{4.36}$ & 0.04 & 4.44 & 0.01 & 4.48 & 0.01 & 4.42 & 0.07 & 4.66 & 0.14 \\ 
   120 & 4.84 & 0.17 & $\bold{4.32}$ & 0.03 & 4.41 & 0.01 & 4.45 & 0.01 & 4.41 & 0.07 & 4.80 & 0.15 \\ 
   &\multicolumn{12}{c}{}\\
   &\multicolumn{12}{c}{\textbf{Frobenius norm}}\\
   &\multicolumn{12}{c}{}\\
  30 & 6.44 & 0.13 & 7.34 & 0.22 & $\bold{6.38}$ & 0.14 & 8.55 & 0.05 & 7.20 & 0.23 & 6.38 & 0.14 \\ 
  60 & $\bold{8.91}$ & 0.16 & 10.65 & 0.17 & 10.41 & 1.39 & 11.74 & 0.05 & 10.08 & 0.28 & 8.81 & 0.15 \\ 
   90 & $\bold{10.96}$ & 0.16 & 13.30 & 0.12 & 14.18 & 0.05 & 14.37 & 0.06 & 12.40 & 0.27 & 10.92 & 0.17 \\ 
  120 & $\bold{12.70}$ & 0.17 & 15.49 & 0.12 & 16.40 & 0.05 & 16.62 & 0.06 & 14.48 & 0.27 & 12.78 & 0.14 \\ 
   &\multicolumn{12}{c}{}\\
   \hline
\end{tabular}
\caption{Comparison of average (SE) matrix losses for MODEL 1 which is a BAND GRAPH 
where ADAPTIVE LASSO has been used to estimate the precision matrix 
over 100 replications \label{tab:bandgraph_adaptive} }
\end{table}
\end{landscape}

\renewcommand{\arraystretch}{1.2}
\begin{landscape}
\begin{table}[!ht]
\centering
\begin{tabular}{r@{\hskip 0.5in} r@{ (}r@{) \hskip 0.3in}r@{ (}r@{) \hskip 0.3in}r@{ (}r@{) \hskip 0.3in}r@{ (}r@{) \hskip 0.3in}r@{ (}r@{) \hskip 0.3in}r@{ (}r@{) \hskip 0.3in}}
  &\multicolumn{2}{c}{\itshape gic} &  \multicolumn{2}{c}{\itshape cv}& \multicolumn{2}{c}{\itshape aic}& \multicolumn{2}{c}{\itshape bic} & \multicolumn{2}{c}{\itshape gbic} & \multicolumn{2}{c}{\itshape kl oracle}\\ 
  \hline
&\multicolumn{12}{c}{}\\
p&\multicolumn{12}{c}{\textbf{Specificity}}\\
&\multicolumn{12}{c}{}\\
30 & 59.25 & 3.70 & $\bold{73.44}$ & 6.82 & 58.09 & 3.78 & 0.00 & 0.00 & 71.37 & 6.64 & 58.04 & 3.87 \\ 
  60 & 44.31 & 2.76 & $\bold{80.90}$ & 7.01 & 17.44 & 21.10 & 0.00 & 0.00 & 68.37 & 7.21 & 42.69 & 2.76 \\ 
  90 & 36.94 & 2.44 &$\bold{86.25}$ & 6.39 & 0.00 & 0.00 & 0.00 & 0.00 & 65.13 & 6.86 & 36.24 & 2.39 \\ 
  120 & 31.12 & 2.30 & $\bold{90.48}$ & 6.73 & 0.00 & 0.00 & 0.00 & 0.00 & 64.05 & 6.55 & 31.89 & 1.57 \\ 
  &\multicolumn{12}{c}{}\\
  &\multicolumn{12}{c}{\textbf{Sensitivity}}\\
  &\multicolumn{12}{c}{}\\
  30  & 57.97 & 4.31 & 33.71 & 7.72 &$\bold{59.84}$ & 4.11 & 0.00 & 0.00 & 36.05 & 8.09 & 59.99 & 4.20 \\ 
   60  & $\bold{55.51}$ & 3.28 & 16.09 & 5.73 & 23.61 & 28.54 & 0.00 & 0.00 & 30.30 & 7.35 & 57.61 & 3.29 \\ 
   90  & $\bold{54.33}$ & 2.89 & 8.73 & 3.53 & 0.00 & 0.00 & 0.00 & 0.00 & 29.34 & 5.89 & 55.07 & 2.92 \\ 
   120  & $\bold{53.49}$ & 2.61 & 5.93 & 2.83 & 0.00 & 0.00 & 0.00 & 0.00 & 26.48 & 5.30 & 52.55 & 2.37 \\ 
   &\multicolumn{12}{c}{}\\
  &\multicolumn{12}{c}{\textbf{MCC}}\\
  &\multicolumn{12}{c}{}\\
   30 & $\bold{44.71}$ & 3.54 & 39.66 & 5.14 & 44.69 & 3.61 & 0.00 & 0.00 & 40.01 & 5.31 & 44.71 & 3.68 \\ 
   60 & $\bold{41.00}$ & 2.23 & 32.08 & 5.86 & 16.62 & 20.09 & 0.00 & 0.00 & 40.01 & 3.91 & 40.65 & 2.26 \\ 
   90 & 38.29 & 1.99 & 25.29 & 5.24 & 0.00 & 0.00 & 0.00 & 0.00 & $\bold{39.93}$ & 3.30 & 38.05 & 1.92 \\ 
   120 & 35.27 & 1.59 & 21.50 & 5.14 & 0.00 & 0.00 & 0.00 & 0.00 & $\bold{38.30}$ & 2.94 & 35.57 & 1.39 \\ 
   &\multicolumn{12}{c}{}\\
   \hline
\end{tabular}
\caption{Comparison of average (SE) support recovery for MODEL 1 which is a BAND GRAPH
where ADAPTIVE LASSO has been used to estimate the precision matrix 
over 100 replications \label{tab:bandgraph_adaptive_roc} }
\end{table}
\end{landscape}

\renewcommand{\arraystretch}{1.2}
 \begin{landscape}
 \begin{table}[!ht]
 \centering
 \begin{tabular}{r@{\hskip 0.5in} r@{ (}r@{) \hskip 0.3in}r@{ (}r@{) \hskip 0.3in}r@{ (}r@{) \hskip 0.3in}r@{ (}r@{) \hskip 0.3in}r@{ (}r@{) \hskip 0.3in}r@{ (}r@{) \hskip 0.3in}}
   &\multicolumn{2}{c}{\itshape gic} &  \multicolumn{2}{c}{\itshape cv}& \multicolumn{2}{c}{\itshape aic}& \multicolumn{2}{c}{\itshape bic} & \multicolumn{2}{c}{\itshape gbic} & \multicolumn{2}{c}{\itshape kl oracle}\\ 
   \hline
 &\multicolumn{12}{c}{}\\
 p&\multicolumn{12}{c}{\textbf{ Kullback Leibler loss function }}\\
 &\multicolumn{12}{c}{}\\
 30 & 1.87 & 0.13 & $\bold{1.78}$ & 0.14 & 2.16 & 0.20 & 2.43 & 0.43 & 3.20 & 0.31 & 1.72 & 0.12 \\ 
   60 & 4.16 & 0.18 & $\bold{4.00}$ & 0.18 & 7.58 & 0.71 & 5.74 & 0.67 & 6.69 & 0.37 & 3.97 & 0.17 \\ 
   90 & 6.55 & 0.25 & $\bold{6.41}$ & 0.24 & 16.67 & 1.82 & 9.55 & 0.76 & 10.17 & 0.42 & 6.37 & 0.23 \\ 
   120 & 9.11 & 0.29 & $\bold{9.05}$ & 0.31 & 28.79 & 3.78 & 13.35 & 0.80 & 13.87 & 0.58 & 8.95 & 0.27 \\ 
   &\multicolumn{12}{c}{}\\
   &\multicolumn{12}{c}{\textbf{Operator norm}}\\
   &\multicolumn{12}{c}{}\\
   30  & 3.73 & 0.12 & 3.16 & 0.31 & $\bold{2.17}$ & 0.26 & 3.75 & 0.71 & 4.42 & 0.13 & 3.29 & 0.20 \\ 
   60  & 3.69 & 0.06 & 3.49 & 0.12 & $\bold{2.69}$ & 0.25 & 4.09 & 0.14 & 4.28 & 0.06 & 3.48 & 0.07 \\ 
   90  & 3.69 & 0.06 & 3.59 & 0.08 & $\bold{3.32}$ & 0.35 & 4.17 & 0.09 & 4.24 & 0.05 & 3.56 & 0.06 \\ 
   120  & 3.71 & 0.05 & $\bold{3.65}$ & 0.08 & 3.78 & 0.43 & 4.20 & 0.07 & 4.24 & 0.05 & 3.62 & 0.04 \\ 
    &\multicolumn{12}{c}{}\\
   &\multicolumn{12}{c}{\textbf{Matrix $\ell_1$ norm}}\\
   &\multicolumn{12}{c}{}\\
    30 & 4.53 & 0.15 & $\bold{4.40}$ & 0.27 & 4.55 & 0.53 & 4.58 & 0.26 & 4.76 & 0.08 & 4.42 & 0.24 \\ 
    60 & $\bold{4.48}$ & 0.12 & 4.57 & 0.15 & 7.07 & 0.86 & 4.38 & 0.06 & 4.42 & 0.03 & 4.57 & 0.14 \\ 
    90 & 4.57 & 0.12 & 4.70 & 0.16 & 9.49 & 1.06 & $\bold{4.35}$ & 0.04 & 4.36 & 0.03 & 4.73 & 0.13 \\ 
    120 & 4.68 & 0.13 & 4.78 & 0.17 & 11.52 & 1.46 & $\bold{4.34}$ & 0.03 & $\bold{4.34}$ & 0.03 & 4.83 & 0.12 \\ 
    &\multicolumn{12}{c}{}\\
    &\multicolumn{12}{c}{\textbf{Frobenius norm}}\\
    &\multicolumn{12}{c}{}\\
    30 & 6.47 & 0.17 & 5.70 & 0.34 & $\bold{4.93}$ & 0.28 & 6.72 & 0.90 & 7.78 & 0.25 & 5.82 & 0.25 \\ 
    60 & 9.15 & 0.15 & 8.69 & 0.22 & $\bold{8.59}$ & 0.42 & 10.28 & 0.42 & 10.88 & 0.20 & 8.66 & 0.12 \\ 
    90 & 11.29 & 0.15 & $\bold{10.98}$ & 0.21 & 12.75 & 0.91 & 13.05 & 0.37 & 13.35 & 0.19 & 10.90 & 0.11 \\ 
    120 & 13.16 & 0.17 & $\bold{12.95}$ & 0.28 & 16.92 & 1.56 & 15.32 & 0.32 & 15.53 & 0.22 & 12.84 & 0.11 \\ 
    &\multicolumn{12}{c}{}\\
    \hline
 \end{tabular}
 \caption{Comparison of average (SE) matrix losses for MODEL 1 which is a BAND GRAPH
 where SCAD has been used to estimate the precision matrix 
 over 100 replications \label{tab:bandgraph_norm_scad} }
 \end{table}
 \end{landscape}
 
   \renewcommand{\arraystretch}{1.2}
  \begin{landscape}
  \begin{table}[!ht]
  \centering
  \begin{tabular}{r@{\hskip 0.5in} r@{ (}r@{) \hskip 0.3in}r@{ (}r@{) \hskip 0.3in}r@{ (}r@{) \hskip 0.3in}r@{ (}r@{) \hskip 0.3in}r@{ (}r@{) \hskip 0.3in}r@{ (}r@{) \hskip 0.3in}}
    &\multicolumn{2}{c}{\itshape gic} &  \multicolumn{2}{c}{\itshape cv}& \multicolumn{2}{c}{\itshape aic}& \multicolumn{2}{c}{\itshape bic} & \multicolumn{2}{c}{\itshape gbic} & \multicolumn{2}{c}{\itshape kl oracle}\\ 
    \hline
  &\multicolumn{12}{c}{}\\
  p&\multicolumn{12}{c}{\textbf{Specificity}}\\
  &\multicolumn{12}{c}{}\\
  30 & 49.48 & 3.67 & 43.87 & 2.46 & 38.56 & 1.62 & 59.91 & 12.44 & $\bold{78.70}$ & 10.08 & 44.31 & 1.98 \\ 
  60 & 38.48 & 2.80 & 32.19 & 2.38 & 22.26 & 0.82 & 65.69 & 9.71 & $\bold{79.40}$ & 8.29 & 31.79 & 1.38 \\ 
  90 & 32.52 & 2.58 & 27.97 & 2.56 & 15.98 & 0.50 & 71.77 & 8.47 & $\bold{78.45}$ & 6.59 & 26.79 & 0.97 \\ 
  120 & 28.59 & 2.29 & 25.84 & 3.26 & 12.70 & 0.41 & 74.01 & 7.07 & $\bold{79.09}$ & 6.51 & 24.08 & 0.75 \\ 
    &\multicolumn{12}{c}{}\\
    &\multicolumn{12}{c}{\textbf{Sensitivity}}\\
    &\multicolumn{12}{c}{}\\
  30  & 60.76 & 4.31 & 76.88 & 6.02 & $\bold{92.19}$ & 2.97 & 49.71 & 22.60 & 20.49 & 9.52 & 74.55 & 5.95 \\ 
  60  & 55.98 & 3.27 & 64.43 & 3.58 & $\bold{89.08}$ & 2.47 & 28.59 & 10.40 & 14.67 & 6.06 & 64.88 & 2.59 \\ 
  90  & 54.43 & 2.90 & 59.27 & 3.33 & $\bold{86.90}$ & 2.11 & 20.53 & 7.99 & 14.25 & 4.41 & 60.55 & 2.14 \\ 
  120  & 52.88 & 2.78 & 55.86 & 3.80 & $\bold{84.28}$ & 1.97 & 16.27 & 6.50 & 12.32 & 4.62 & 57.58 & 1.97 \\ 
      &\multicolumn{12}{c}{}\\
    &\multicolumn{12}{c}{\textbf{MCC}}\\
    &\multicolumn{12}{c}{}\\
  30 & 37.22 & 3.93 & 38.07 & 3.06 & 37.73 & 3.06 & $\bold{38.04}$ & 5.37 & 31.15 & 6.96 & 37.64 & 3.34 \\ 
  60 & 36.50 & 2.25 & 33.74 & 2.00 & 28.66 & 1.64 & $\bold{36.83}$ & 6.10 & 29.99 & 6.04 & 33.55 & 1.67 \\ 
  90 & $\bold{34.77}$ & 1.97 & 32.34 & 1.98 & 24.38 & 1.18 & 34.59 & 5.15 & 30.78 & 4.50 & 31.70 & 1.33 \\ 
  120 & $\bold{32.96}$ & 1.72 & 31.42 & 2.10 & 21.72 & 1.02 & 32.00 & 5.23 & 29.11 & 4.60 & 30.49 & 1.08 \\ 
       &\multicolumn{12}{c}{}\\
     \hline
  \end{tabular}
  \caption{Comparison of average (SE) support recovery for MODEL 1which is a BAND GRAPH
  where SCAD has been used to estimate the precision matrix 
  over 100 replications \label{tab:bandgraph_scad_roc} }
  \end{table}
  \end{landscape}

  \renewcommand{\arraystretch}{1.2}
  \begin{landscape}
  \begin{table}[!ht]
  \centering
  \begin{tabular}{r@{\hskip 0.5in} r@{ (}r@{) \hskip 0.3in}r@{ (}r@{) \hskip 0.3in}r@{ (}r@{) \hskip 0.3in}r@{ (}r@{) \hskip 0.3in}r@{ (}r@{) \hskip 0.3in}r@{ (}r@{) \hskip 0.3in}}
    &\multicolumn{2}{c}{\itshape gic} &  \multicolumn{2}{c}{\itshape cv}& \multicolumn{2}{c}{\itshape aic}& \multicolumn{2}{c}{\itshape bic} & \multicolumn{2}{c}{\itshape gbic} & \multicolumn{2}{c}{\itshape kl oracle}\\ 
    \hline
  &\multicolumn{12}{c}{}\\
  p&\multicolumn{12}{c}{\textbf{ Kullback Leibler loss function }}\\
  &\multicolumn{12}{c}{}\\
  30 & 0.83 & 0.10 & 0.86 & 0.12 & $\bold{0.79}$ & 0.10 & 1.28 & 0.25 & 1.45 & 0.22 & 0.74 & 0.08 \\ 
  60 & $\bold{2.46}$ & 0.20 & 2.57 & 0.22 & 2.68 & 0.30 & 4.41 & 0.54 & 4.07 & 0.33 & 2.29 & 0.15 \\ 
  90 & $\bold{3.70}$ & 0.19 & 4.03 & 0.31 & 3.90 & 0.51 & 6.99 & 0.53 & 6.31 & 0.45 & 3.60 & 0.20 \\ 
   120 & $\bold{4.93}$ & 0.21 & 5.46 & 0.37 & 5.36 & 0.41 & 8.97 & 0.54 & 8.00 & 0.47 & 4.87 & 0.20 \\ 
    &\multicolumn{12}{c}{}\\
    &\multicolumn{12}{c}{\textbf{Operator norm}}\\
    &\multicolumn{12}{c}{}\\
     30  & 1.49 & 0.13 & 1.51 & 0.14 & $\bold{1.18}$ & 0.14 & 1.81 & 0.15 & 1.90 & 0.12 & 1.30 & 0.12 \\ 
     60  & $\bold{2.11}$ & 0.11 & 2.17 & 0.12 & 1.64 & 0.18 & 2.64 & 0.10 & 2.58 & 0.07 & 1.92 & 0.10 \\ 
     90  & 1.87 & 0.09 & 1.99 & 0.09 & $\bold{1.76}$ & 0.21 & 2.46 & 0.08 & 2.37 & 0.08 & 1.77 & 0.09 \\ 
     120  & $\bold{1.82}$ & 0.09 & 1.97 & 0.09 & 1.95 & 0.11 & 2.37 & 0.06 & 2.29 & 0.07 & 1.77 & 0.09 \\ 
      &\multicolumn{12}{c}{}\\
    &\multicolumn{12}{c}{\textbf{Matrix $\ell_1$ norm}}\\
    &\multicolumn{12}{c}{}\\
     30 & 2.48 & 0.27 & 2.51 & 0.27 & $\bold{2.29}$ & 0.29 & 2.85 & 0.26 & 2.96 & 0.23 & 2.34 & 0.28 \\ 
     60 & 4.31 & 0.31 & 4.38 & 0.32 & $\bold{3.87}$ & 0.42 & 5.20 & 0.23 & 5.08 & 0.21 & 4.06 & 0.34 \\ 
     90 & 3.91 & 0.34 & 4.05 & 0.30 & $\bold{3.87}$ & 0.39 & 4.86 & 0.22 & 4.68 & 0.24 & 3.83 & 0.36 \\ 
     120 & $\bold{3.77}$ & 0.29 & 3.83 & 0.24 & 3.82 & 0.26 & 4.29 & 0.14 & 4.15 & 0.17 & 3.76 & 0.30 \\   &\multicolumn{12}{c}{}\\
     &\multicolumn{12}{c}{\textbf{Frobenius norm}}\\
     &\multicolumn{12}{c}{}\\
    30 & 2.91 & 0.22 & 2.97 & 0.25 & $\bold{2.41}$ & 0.15 & 3.67 & 0.32 & 3.89 & 0.24 & 2.54 & 0.15 \\ 
    60 & 5.00 & 0.24 & 5.16 & 0.26 & $\bold{4.23}$ & 0.18 & 6.61 & 0.30 & 6.42 & 0.19 & 4.52 & 0.15 \\ 
      90 & 5.52 & 0.18 & 5.91 & 0.25 & $\bold{5.33}$ & 0.41 & 7.57 & 0.21 & 7.28 & 0.20 & 5.24 & 0.13 \\ 
      120 & $\bold{6.20}$ & 0.19 & 6.79 & 0.25 & 6.69 & 0.33 & 8.48 & 0.19 & 8.11 & 0.18 & 6.04 & 0.12 \\ 
     &\multicolumn{12}{c}{}\\
     \hline
  \end{tabular}
  \caption{Comparison of average (SE) matrix losses for MODEL 2 which is a SPARSE RANDOM GRAPH
  where GLASSO has been used to estimate the precision matrix 
  over 100 replications \label{tab:randomgraph} }
  \end{table}
  \end{landscape}

\renewcommand{\arraystretch}{1.2}
\begin{landscape}
\begin{table}[!ht]
\centering
\begin{tabular}{r@{\hskip 0.5in} r@{ (}r@{) \hskip 0.3in}r@{ (}r@{) \hskip 0.3in}r@{ (}r@{) \hskip 0.3in}r@{ (}r@{) \hskip 0.3in}r@{ (}r@{) \hskip 0.3in}r@{ (}r@{) \hskip 0.3in}}
  &\multicolumn{2}{c}{\itshape gic} &  \multicolumn{2}{c}{\itshape cv}& \multicolumn{2}{c}{\itshape aic}& \multicolumn{2}{c}{\itshape bic} & \multicolumn{2}{c}{\itshape gbic} & \multicolumn{2}{c}{\itshape kl oracle}\\ 
  \hline
&\multicolumn{12}{c}{}\\
p&\multicolumn{12}{c}{\textbf{Specificity}}\\
&\multicolumn{12}{c}{}\\
30 & 27.77 & 3.49 & 28.31 & 4.64 & 18.59 & 1.71 & 48.25 & 9.72 &$\bold{55.05}$ & 8.15 & 20.98 & 1.19 \\ 
60 & 23.64 & 2.44 & 25.42 & 3.66 & 13.95 & 1.62 & $\bold{64.75}$ & 10.79 & 56.56 & 6.31 & 17.93 & 0.68 \\ 
90 & 18.97 & 2.03 & 24.12 & 4.23 & 16.73 & 4.72 & $\bold{74.94}$ & 8.97 & 62.92 & 7.10 & 15.71 & 0.50 \\ 
120 & 14.62 & 1.68 & 21.96 & 4.10 & 20.96 & 4.40 & $\bold{74.20}$ & 7.19 & 59.19 & 7.15 & 12.94 & 0.35 \\ 
  &\multicolumn{12}{c}{}\\
  &\multicolumn{12}{c}{\textbf{Sensitivity}}\\
  &\multicolumn{12}{c}{}\\
30  & 99.26 & 1.55 & 99.17 & 1.59 & $\bold{99.60}$ & 1.15 & 96.40 & 3.76 & 95.14 & 4.03 & 99.57 & 1.18 \\ 
60  & 93.71 & 2.57 & 93.03 & 2.83 & $\bold{97.14}$ & 2.03 & 71.39 & 10.53 & 77.16 & 5.55 & 95.77 & 2.09 \\ 
90  & 90.73 & 2.34 & 87.32 & 3.22 & $\bold{91.89}$ & 3.86 & 47.95 & 10.36 & 60.05 & 8.25 & 92.72 & 2.13 \\ 
120  & $\bold{92.22}$ & 2.20 & 87.62 & 3.52 & 88.27 & 3.93 & 49.90 & 8.87 & 63.62 & 6.67 & 93.29 & 1.96 \\ 
     &\multicolumn{12}{c}{}\\
  &\multicolumn{12}{c}{\textbf{MCC}}\\
  &\multicolumn{12}{c}{}\\
30 & 45.89 & 4.05 & 46.39 & 5.15 & 33.63 & 2.78 & 64.16 & 7.34 & $\bold{69.15}$ & 5.60 & 37.34 & 1.82 \\ 
60 & 41.94 & 2.77 & 43.69 & 3.80 & 29.10 & 2.70 & $\bold{65.22}$ & 3.46 & 63.58 & 3.58 & 35.18 & 1.26 \\ 
90 & 37.36 & 2.50 & 42.24 & 3.84 & 34.13 & 5.68 & 57.97 & 4.95 & $\bold{59.60}$ & 3.23 & 33.56 & 1.03 \\ 
120 & 33.61 & 2.27 & 41.29 & 3.91 & 40.27 & 4.25 & 59.55 & 4.29 & $\bold{60.04}$ & 3.20 & 31.46 & 0.78 \\ 
     &\multicolumn{12}{c}{}\\
   \hline
\end{tabular}
\caption{Comparison of average (SE) support recovery for MODEL 2 which is a SPARSE RANDOM GRAPH
where GLASSO has been used to estimate the precision matrix 
over 100 replications \label{tab:randomgraph_sparse_roc} }
\end{table}
\end{landscape}

    \renewcommand{\arraystretch}{1.2}
\begin{landscape}
\begin{table}[!ht]
\centering
\begin{tabular}{r@{\hskip 0.5in} r@{ (}r@{) \hskip 0.3in}r@{ (}r@{) \hskip 0.3in}r@{ (}r@{) \hskip 0.3in}r@{ (}r@{) \hskip 0.3in}r@{ (}r@{) \hskip 0.3in}r@{ (}r@{) \hskip 0.3in}}
  &\multicolumn{2}{c}{\itshape gic} &  \multicolumn{2}{c}{\itshape cv}& \multicolumn{2}{c}{\itshape aic}& \multicolumn{2}{c}{\itshape bic} & \multicolumn{2}{c}{\itshape gbic} & \multicolumn{2}{c}{\itshape kl oracle}\\ 
  \hline
&\multicolumn{12}{c}{}\\
p&\multicolumn{12}{c}{\textbf{ Kullback Leibler loss function }}\\
&\multicolumn{12}{c}{}\\
30 & $\bold{0.68}$ & 0.11 & 1.49 & 0.30 & 0.68 & 0.12 & 2.91 & 0.60 & 0.97 & 0.18 & 0.67 & 0.11 \\ 
60 & $\bold{1.89}$ & 0.18 & 4.15 & 0.40 & 4.98 & 1.43 & 6.23 & 0.21 & 2.62 & 0.33 & 1.88 & 0.18 \\ 
90 & $\bold{2.99}$ & 0.20 & 6.30 & 0.31 & 7.89 & 0.13 & 8.55 & 0.20 & 3.94 & 0.38 & 2.96 & 0.20 \\ 
120 & $\bold{4.26}$ & 0.29 & 8.91 & 0.33 & 10.91 & 0.16 & 11.84 & 0.28 & 5.47 & 0.56 & 4.16 & 0.27 \\ 
  &\multicolumn{12}{c}{}\\
  &\multicolumn{12}{c}{\textbf{Operator norm}}\\
  &\multicolumn{12}{c}{}\\
30  & $\bold{1.50}$ & 0.18 & 2.13 & 0.18 & $\bold{1.50}$ & 0.19 & 2.62 & 0.21 & 1.81 & 0.19 & 1.48 & 0.18 \\ 
60  & $\bold{1.35}$ & 0.11 & 2.10 & 0.11 & 2.22 & 0.41 & 2.45 & 0.02 & 1.68 & 0.12 & 1.34 & 0.11 \\ 
90  & $\bold{1.34}$ & 0.09 & 2.13 & 0.06 & 2.31 & 0.01 & 2.35 & 0.01 & 1.69 & 0.09 & 1.37 & 0.08 \\ 
120  & $\bold{1.46}$ & 0.11 & 2.37 & 0.05 & 2.54 & 0.01 & 2.58 & 0.01 & 1.88 & 0.12 & 1.53 & 0.10 \\ 
     &\multicolumn{12}{c}{}\\
  &\multicolumn{12}{c}{\textbf{Matrix $\ell_1$ norm}}\\
  &\multicolumn{12}{c}{}\\
  30 & $\bold{2.31}$ & 0.38 & 3.18 & 0.36 & 2.31 & 0.39 & 3.87 & 0.30 & 2.71 & 0.38 & 2.29 & 0.38 \\ 
  60 & $\bold{2.47}$ & 0.28 & 3.54 & 0.25 & 3.81 & 0.64 & 4.14 & 0.02 & 2.84 & 0.26 & 2.46 & 0.28 \\ 
  90 & $\bold{2.60}$ & 0.28 & 3.42 & 0.10 & 3.61 & 0.01 & 3.65 & 0.01 & 2.85 & 0.25 & 2.58 & 0.27 \\ 
   120 &$\bold{2.83}$ & 0.22 & 3.54 & 0.09 & 3.74 & 0.01 & 3.78 & 0.01 & 3.01 & 0.18 & 2.81 & 0.22 \\ 
 &\multicolumn{12}{c}{}\\
   &\multicolumn{12}{c}{\textbf{Frobenius norm}}\\
   &\multicolumn{12}{c}{}\\
      30 & $\bold{2.72}$ & 0.23 & 4.01 & 0.34 & 2.72 & 0.24 & 5.24 & 0.52 & 3.33 & 0.29 & 2.69 & 0.23 \\ 
    60 & $\bold{3.65}$ & 0.17 & 5.65 & 0.25 & 6.05 & 1.13 & 6.84 & 0.09 & 4.52 & 0.27 & 3.63 & 0.18 \\ 
    90 &$\bold{4.42}$ & 0.16 & 6.95 & 0.16 & 7.74 & 0.05 & 8.00 & 0.07 & 5.54 & 0.25 & 4.49 & 0.15 \\ 
   120 & $\bold{5.21}$ & 0.20 & 8.35 & 0.15 & 9.20 & 0.06 & 9.50 & 0.09 & 6.58 & 0.33 & 5.35 & 0.18 \\ 
    &\multicolumn{12}{c}{}\\
   \hline
\end{tabular}
\caption{Comparison of average (SE) matrix losses for MODEL 2 which is a SPARSE RANDOM GRAPH 
where ADAPTIVE LASSO has been used to estimate the precision matrix 
over 100 replications \label{tab:randomgraph_adptive_norm} }
\end{table}
\end{landscape}

\renewcommand{\arraystretch}{1.2}
\begin{landscape}
\begin{table}[!ht]
\centering
\begin{tabular}{r@{\hskip 0.5in} r@{ (}r@{) \hskip 0.3in}r@{ (}r@{) \hskip 0.3in}r@{ (}r@{) \hskip 0.3in}r@{ (}r@{) \hskip 0.3in}r@{ (}r@{) \hskip 0.3in}r@{ (}r@{) \hskip 0.3in}}
  &\multicolumn{2}{c}{\itshape gic} &  \multicolumn{2}{c}{\itshape cv}& \multicolumn{2}{c}{\itshape aic}& \multicolumn{2}{c}{\itshape bic} & \multicolumn{2}{c}{\itshape gbic} & \multicolumn{2}{c}{\itshape kl oracle}\\ 
  \hline
&\multicolumn{12}{c}{}\\
p&\multicolumn{12}{c}{\textbf{Specificity}}\\
&\multicolumn{12}{c}{}\\
 30& 54.14 & 6.96 & $\bold{90.47}$ & 7.82 & 52.81 & 7.77 & 24.96 & 42.52 & 78.07 & 8.02 & 51.62 & 5.87 \\ 
  60& 39.15 & 4.32 & $\bold{94.66}$ & 5.35 & 10.43 & 23.60 & 0.00 & 0.00 & 70.29 & 7.90 & 37.94 & 3.99 \\ 
 90 & 27.68 & 3.02 & $\bold{97.48}$ & 3.27 & 0.00 & 0.00 & 0.00 & 0.00 & 66.73 & 8.57 & 30.16 & 1.73 \\ 
 120 & 24.26 & 2.76 & $\bold{98.08}$ & 2.80 & 1.00 & 10.00 & 0.00 & 0.00 & 64.45 & 8.33 & 28.85 & 1.46 \\ 
  &\multicolumn{12}{c}{}\\
  &\multicolumn{12}{c}{\textbf{Sensitivity}}\\
  &\multicolumn{12}{c}{}\\
 30  & 96.70 & 3.23 & 76.39 & 11.44 & $\bold{96.85}$ & 3.55 & 12.06 & 26.16 & 89.24 & 6.31 & 97.06 & 3.15 \\ 
  60 & $\bold{85.55}$ & 4.07 & 30.54 & 12.55 & 13.91 & 31.95 & 0.00 & 0.00 & 67.32 & 8.19 & 86.15 & 4.07 \\ 
   90& $\bold{85.20}$ & 3.08 & 16.37 & 7.38 & 0.00 & 0.00 & 0.00 & 0.00 & 62.92 & 7.74 & 83.81 & 2.99 \\ 
   120& $\bold{87.22}$ & 3.04 & 13.58 & 6.17 & 0.01 & 0.06 & 0.00 & 0.00 & 64.32 & 8.18 & 84.93 & 3.02 \\ 
     &\multicolumn{12}{c}{}\\
  &\multicolumn{12}{c}{\textbf{MCC}}\\
  &\multicolumn{12}{c}{}\\
 30 & 69.41 & 5.15 & 81.42 & 5.66 & 68.40 & 5.33 & 15.38 & 29.20 & $\bold{81.80}$ & 4.68 & 67.68 & 4.48 \\ 
 60 & 54.55 & 3.84 & 51.33 & 9.81 & 9.34 & 20.54 & 0.00 & 0.00 & $\bold{66.71}$ & 4.16 & 53.77 & 3.55 \\ 
 90 & 45.83 & 2.88 & 38.18 & 8.94 & 0.00 & 0.00 & 0.00 & 0.00 &$\bold{63.21}$ & 3.63 & 47.79 & 2.01 \\ 
 120 & 43.77 & 2.68 & 35.16 & 8.11 & 0.08 & 0.75 & 0.00 & 0.00 &$\bold{63.03}$ & 3.47 & 47.61 & 1.77 \\ 
       &\multicolumn{12}{c}{}\\
   \hline
\end{tabular}
\caption{Comparison of average (SE) support recovery for MODEL 2 which is a SPARSE RANDOM GRAPH
where  ADAPTIVE LASSO has been used to estimate the precision matrix 
over 100 replications \label{tab:bandgraph_adaptive_roc_1} }
\end{table}
\end{landscape}

\renewcommand{\arraystretch}{1.2}

\begin{landscape}
\begin{table}[!ht]
\centering
\begin{tabular}{r@{\hskip 0.5in} r@{ (}r@{) \hskip 0.3in}r@{ (}r@{) \hskip 0.3in}r@{ (}r@{) \hskip 0.3in}r@{ (}r@{) \hskip 0.3in}r@{ (}r@{) \hskip 0.3in}r@{ (}r@{) \hskip 0.3in}}
  &\multicolumn{2}{c}{\itshape gic} &  \multicolumn{2}{c}{\itshape cv}& \multicolumn{2}{c}{\itshape aic}& \multicolumn{2}{c}{\itshape bic} & \multicolumn{2}{c}{\itshape gbic} & \multicolumn{2}{c}{\itshape kl oracle}\\ 
  \hline
&\multicolumn{12}{c}{}\\
p&\multicolumn{12}{c}{\textbf{ Kullback Leibler loss function}}\\
&\multicolumn{12}{c}{}\\
30 & $\bold{0.79}$ & 0.12 & $\bold{0.75}$ & 0.14 & 1.20 & 0.20 & 1.04 & 0.21 & 1.47 & 0.26 & 0.70 & 0.10 \\ 
60 & 1.93 & 0.19 & $\bold{1.86}$ & 0.17 & 5.09 & 0.69 & 2.74 & 0.32 & 3.51 & 0.36 & 1.81 & 0.15 \\ 
90 & 3.05 & 0.21 & $\bold{3.01}$ & 0.22 & 12.84 & 1.87 & 4.26 & 0.50 & 5.29 & 0.48 & 2.96 & 0.20 \\ 
120 & $\bold{4.16}$ & 0.24 & $\bold{4.16}$ & 0.26 & 22.46 & 4.31 & 6.08 & 0.66 & 7.10 & 0.55 & 4.10 & 0.23 \\ 
  &\multicolumn{12}{c}{}\\
  &\multicolumn{12}{c}{\textbf{Operator norm}}\\
  &\multicolumn{12}{c}{}\\
 30  & 1.55 & 0.16 & $\bold{1.24}$ & 0.25 & 1.38 & 0.28 & 1.79 & 0.18 & 2.07 & 0.14 & 1.27 & 0.19 \\ 
 60  & 1.52 & 0.12 & $\bold{1.39}$ & 0.16 & 2.07 & 0.24 & 1.83 & 0.10 & 2.03 & 0.10 & 1.36 & 0.11 \\ 
 90  & 1.94 & 0.16 & $\bold{1.85}$ & 0.16 & 2.73 & 0.29 & 2.34 & 0.15 & 2.59 & 0.12 & 1.82 & 0.15 \\ 
 120  & 1.53 & 0.10 & $\bold{1.51}$ & 0.10 & 3.15 & 0.43 & 1.94 & 0.11 & 2.09 & 0.08 & 1.49 & 0.08 \\ 
     &\multicolumn{12}{c}{}\\
  &\multicolumn{12}{c}{\textbf{Matrix $\ell_1$ norm}}\\
  &\multicolumn{12}{c}{}\\
 30 & 2.54 & 0.39 & $\bold{2.22}$ & 0.42 & 2.80 & 0.53 & 2.89 & 0.37 & 3.30 & 0.29 & 2.22 & 0.41 \\ 
 60 & 2.78 & 0.26 & $\bold{2.66}$ & 0.32 & 4.95 & 0.73 & 3.15 & 0.24 & 3.47 & 0.22 & 2.63 & 0.30 \\ 
 90 & 3.90 & 0.42 & $\bold{3.76}$ & 0.42 & 7.43 & 0.93 & 4.63 & 0.40 & 5.12 & 0.30 & 3.71 & 0.43 \\ 
 120 &$\bold{2.90}$ & 0.26 & $\bold{2.90}$ & 0.26 & 8.92 & 1.38 & 3.18 & 0.23 & 3.38 & 0.19 & 2.90 & 0.26 \\ 
  &\multicolumn{12}{c}{}\\
   &\multicolumn{12}{c}{\textbf{Frobenius norm}}\\
   &\multicolumn{12}{c}{}\\
  30 & 2.81 & 0.24 & $\bold{2.49}$ & 0.28 & 3.03 & 0.35 & 3.26 & 0.32 & 3.88 & 0.30 & 2.44 & 0.19 \\ 
  60 & 3.91 & 0.21 & $\bold{3.66}$ & 0.25 & 6.27 & 0.61 & 4.77 & 0.26 & 5.47 & 0.27 & 3.58 & 0.16 \\ 
  90 & 4.86 & 0.21 & $\bold{4.71}$ & 0.21 & 10.35 & 1.05 & 5.94 & 0.36 & 6.68 & 0.29 & 4.66 & 0.16 \\ 
   120 & 5.31 & 0.20 & $\bold{5.27}$ & 0.21 & 13.99 & 2.02 & 6.79 & 0.39 & 7.39 & 0.28 & 5.19 & 0.15 \\ 
        &\multicolumn{12}{c}{}\\
   \hline
\end{tabular}
\caption{Comparison of average (SE) matrix losses for MODEL 2 which is a SPARSE RANDOM GRAPH
where SCAD has been used to estimate the precision matrix 
over 100 replications \label{tab:randomgraph_scad_norm} }
\end{table}
\end{landscape}

 \renewcommand{\arraystretch}{1.2}
 \begin{landscape}
 \begin{table}[!ht]
 \centering
 \begin{tabular}{r@{\hskip 0.5in} r@{ (}r@{) \hskip 0.3in}r@{ (}r@{) \hskip 0.3in}r@{ (}r@{) \hskip 0.3in}r@{ (}r@{) \hskip 0.3in}r@{ (}r@{) \hskip 0.3in}r@{ (}r@{) \hskip 0.3in}}
   &\multicolumn{2}{c}{\itshape gic} &  \multicolumn{2}{c}{\itshape cv}& \multicolumn{2}{c}{\itshape aic}& \multicolumn{2}{c}{\itshape bic} & \multicolumn{2}{c}{\itshape gbic} & \multicolumn{2}{c}{\itshape kl oracle}\\ 
   \hline
 &\multicolumn{12}{c}{}\\
 p&\multicolumn{12}{c}{\textbf{Specificity}}\\
 &\multicolumn{12}{c}{}\\
   30 & 40.72 & 4.44 & 29.11 & 4.41 & 17.83 & 1.53 & 55.06 & 9.24 & $\bold{73.95}$ & 9.46 & 30.52 & 3.30 \\ 
   60 & 32.40 & 3.86 & 25.62 & 3.88 & 10.46 & 1.75 & 60.06 & 6.10 & $\bold{76.52}$ & 6.29 & 24.05 & 1.49 \\ 
   90 & 26.73 & 4.14 & 22.54 & 3.87 & 6.87 & 0.36 & 58.96 & 7.40 & $\bold{77.04}$ & 6.42 & 21.12 & 0.93 \\ 
   120 & 23.19 & 3.58 & 22.03 & 4.31 & 5.80 & 2.78 & 62.41 & 7.77 & $\bold{76.83}$ & 6.01 & 20.04 & 0.80 \\ 
   &\multicolumn{12}{c}{}\\
   &\multicolumn{12}{c}{\textbf{Sensitivity}}\\
   &\multicolumn{12}{c}{}\\
        30  & 94.66 & 3.50 & 97.53 & 2.36 & $\bold{98.89}$ & 1.41 & 89.42 & 5.78 & 78.50 & 7.24 & 97.42 & 2.42 \\ 
        60  & 93.00 & 3.23 & 95.29 & 2.68 & $\bold{98.01}$ & 1.65 & 80.57 & 7.08 & 68.44 & 7.89 & 95.93 & 2.15 \\ 
        90  & 88.93 & 3.58 & 91.06 & 3.05 & $\bold{96.71}$ & 1.67 & 70.35 & 7.54 & 55.71 & 8.38 & 91.66 & 2.75 \\ 
          120  & 87.73 & 3.38 & 88.70 & 3.34 & $\bold{95.41}$ & 1.95 & 63.16 & 8.64 & 51.30 & 8.22 & 89.80 & 2.44 \\ 
         &\multicolumn{12}{c}{}\\
   &\multicolumn{12}{c}{\textbf{MCC}}\\
   &\multicolumn{12}{c}{}\\
       30 & 57.03 & 4.01 & 46.13 & 4.95 & 31.11 & 2.77 & 66.22 & 5.93 & $\bold{73.46}$ & 5.04 & 47.83 & 3.43 \\ 
       60 & 51.37 & 3.31 & 45.06 & 3.95 & 23.53 & 2.62 & 67.54 & 3.59 & $\bold{70.76}$ & 4.17 & 43.68 & 1.78 \\ 
       90 & 45.91 & 3.63 & 42.08 & 3.87 & 18.80 & 1.12 & 62.73 & 2.89 & $\bold{64.18}$ & 3.80 & 40.85 & 1.36 \\ 
       120 & 42.72 & 3.22 & 41.65 & 3.95 & 17.39 & 3.93 & 61.35 & 3.00 & $\bold{61.66}$ & 4.14 & 39.99 & 1.21\\        &\multicolumn{12}{c}{}\\
    \hline
 \end{tabular}
 \caption{Comparison of average (SE) support recovery for MODEL 2 which is a SPARSE RANDOM GRAPH where SCAD has been used to estimate the precision matrix 
 over 100 replications \label{tab:bandgraph_scad_roc_2} }
 \end{table}
 \end{landscape}
 
   \renewcommand{\arraystretch}{1.2}
   \begin{landscape}
   \begin{table}[!ht]
   \centering
   \begin{tabular}{r@{\hskip 0.5in} r@{ (}r@{) \hskip 0.3in}r@{ (}r@{) \hskip 0.3in}r@{ (}r@{) \hskip 0.3in}r@{ (}r@{) \hskip 0.3in}r@{ (}r@{) \hskip 0.3in}r@{ (}r@{) \hskip 0.3in}}
     &\multicolumn{2}{c}{\itshape gic} &  \multicolumn{2}{c}{\itshape cv}& \multicolumn{2}{c}{\itshape aic}& \multicolumn{2}{c}{\itshape bic} & \multicolumn{2}{c}{\itshape gbic} & \multicolumn{2}{c}{\itshape kl oracle}\\ 
     \hline
   &\multicolumn{12}{c}{}\\
   p&\multicolumn{12}{c}{\textbf{ Kullback Leibler loss function}}\\
   &\multicolumn{12}{c}{}\\
   30 & $\bold{1.71}$ & 0.10 & 1.74 & 0.12 & 1.66 & 0.10 & 2.91 & 0.32 & 2.44 & 0.28 & 1.61 & 0.08 \\ 
   60 & $\bold{3.92}$ & 0.18 & 4.13 & 0.22 & 4.22 & 0.23 & 7.83 & 0.38 & 5.84 & 0.55 & 3.82 & 0.15 \\ 
   90 & $\bold{6.90}$ & 0.21 & 7.32 & 0.31 & 8.52 & 0.63 & 11.85 & 0.22 & 10.63 & 0.58 & 6.82 & 0.19 \\ 
   120 & $\bold{9.74}$ & 0.20 & 10.37 & 0.35 & 13.13 & 1.33 & 16.21 & 0.24 & 15.14 & 0.58 & 9.70 & 0.20 \\ 
     &\multicolumn{12}{c}{}\\
     &\multicolumn{12}{c}{\textbf{Operator norm}}\\
     &\multicolumn{12}{c}{}\\
     30  & 10.20 & 0.07 & 10.22 & 0.12 & $\bold{9.71}$ & 0.11 & 10.58 & 0.03 & 10.51 & 0.05 & 9.94 & 0.06 \\ 
     60  & 15.21 & 0.05 & 15.29 & 0.06 & $\bold{14.69}$ & 0.11 & 15.52 & 0.01 & 15.48 & 0.02 & 15.07 & 0.04 \\ 
     90  & 17.79 & 0.04 & 17.90 & 0.05 & $\bold{17.16}$ & 0.14 & 18.08 & 0.01 & 18.06 & 0.01 & 17.72 & 0.04 \\ 
     120  & 21.76 & 0.04 & 21.89 & 0.04 & $\bold{21.13}$ & 0.20 & 22.04 & 0.01 & 22.03 & 0.01 & 21.73 & 0.03 \\ 
        &\multicolumn{12}{c}{}\\
     &\multicolumn{12}{c}{\textbf{Matrix $\ell_1$ norm}}\\
     &\multicolumn{12}{c}{}\\
     30 & 11.00 & 0.14 & 11.03 & 0.17 & $\bold{10.58}$ & 0.18 & 11.34 & 0.07 & 11.31 & 0.09 & 10.77 & 0.15 \\ 
     60 & 16.91 & 0.20 & 16.99 & 0.19 & $\bold{16.37}$ & 0.28 & 17.19 & 0.06 & 17.19 & 0.14 & 16.76 & 0.22 \\ 
     90 & 19.92 & 0.17 & 20.03 & 0.17 & $\bold{19.33}$ & 0.28 & 20.14 & 0.07 & 20.19 & 0.10 & 19.84 & 0.19 \\ 
     120 & 24.48 & 0.22 & 24.60 & 0.20 & $\bold{23.95}$ & 0.34 & 24.70 & 0.06 & 24.71 & 0.11 & 24.46 & 0.23 \\ 
     &\multicolumn{12}{c}{}\\
      &\multicolumn{12}{c}{\textbf{Frobenius norm}}\\
      &\multicolumn{12}{c}{}\\
     30 & 10.59 & 0.10 & 10.63 & 0.16 & $\bold{10.06}$ & 0.10 & 11.33 & 0.11 & 11.15 & 0.12 & 10.28 & 0.07 \\ 
     60 & 15.91 & 0.08 & 16.07 & 0.10 & $\bold{15.33}$ & 0.08 & 17.03 & 0.07 & 16.65 & 0.11 & 15.70 & 0.04 \\ 
     90 & 18.93 & 0.07 & 19.20 & 0.12 & $\bold{18.31}$ & 0.06 & 20.19 & 0.04 & 19.98 & 0.09 & 18.79 & 0.04 \\ 
     120 & 23.06 & 0.07 & 23.43 & 0.11 & $\bold{22.59}$ & 0.15 & 24.43 & 0.04 & 24.27 & 0.08 & 23.01 & 0.04 \\
           &\multicolumn{12}{c}{}\\
      \hline
   \end{tabular}
   \caption{Comparison of average (SE) matrix losses for MODEL 3 which is a DENSE RANDOM  GRAPH 
   where GLASSO has been used to estimate the precision matrix 
   over 100 replications \label{tab:dense_randomgraph_glasso_norm} }
   \end{table}
   \end{landscape}
   \renewcommand{\arraystretch}{1.2}
\begin{landscape}
\begin{table}[!ht]
\centering
\begin{tabular}{r@{\hskip 0.5in} r@{ (}r@{) \hskip 0.3in}r@{ (}r@{) \hskip 0.3in}r@{ (}r@{) \hskip 0.3in}r@{ (}r@{) \hskip 0.3in}r@{ (}r@{) \hskip 0.3in}r@{ (}r@{) \hskip 0.3in}}
  &\multicolumn{2}{c}{\itshape gic} &  \multicolumn{2}{c}{\itshape cv}& \multicolumn{2}{c}{\itshape aic}& \multicolumn{2}{c}{\itshape bic} & \multicolumn{2}{c}{\itshape gbic} & \multicolumn{2}{c}{\itshape kl oracle}\\ 
  \hline
&\multicolumn{12}{c}{}\\
p&\multicolumn{12}{c}{\textbf{Specificity}}\\
&\multicolumn{12}{c}{}\\
    & N/A & N/A & N/A & N/A& N/A & N/A& N/A&N/A & N/A &N/A & N/A & N/A\\ 
  & N/A & N/A & N/A & N/A& N/A & N/A& N/A&N/A & N/A &N/A & N/A & N/A\\ 
   & N/A & N/A & N/A & N/A& N/A & N/A& N/A&N/A & N/A &N/A & N/A & N/A\\ 
    & N/A & N/A & N/A & N/A& N/A & N/A& N/A&N/A & N/A &N/A & N/A & N/A\\   &\multicolumn{12}{c}{}\\
  &\multicolumn{12}{c}{\textbf{Sensitivity}}\\
  &\multicolumn{12}{c}{}\\
30  & 33.10 & 3.68 & 32.70 & 6.19 & $\bold{51.09}$ & 3.61 & 4.35 & 3.62 & 11.04 & 4.65 & 44.91 & 3.35 \\ 
60  & 25.97 & 1.80 & 22.30 & 2.53 & $\bold{41.97}$ & 2.63 & 0.59 & 0.56 & 7.26 & 2.67 & 31.79 & 1.48 \\ 
90  & 22.67 & 1.56 & 17.16 & 2.69 & $\bold{40.98}$ & 2.85 & 0.25 & 0.15 & 1.94 & 1.22 & 26.10 & 1.03 \\ 
120  & 20.79 & 1.44 & 14.18 & 2.00 & $\bold{37.13}$ & 5.00 & 0.17 & 0.10 & 0.92 & 0.64 & 22.00 & 0.77 \\ 
         &\multicolumn{12}{c}{}\\
  &\multicolumn{12}{c}{\textbf{MCC}}\\
  &\multicolumn{12}{c}{}\\
    & N/A & N/A & N/A & N/A& N/A & N/A& N/A&N/A & N/A &N/A & N/A & N/A\\ 
  & N/A & N/A & N/A & N/A& N/A & N/A& N/A&N/A & N/A &N/A & N/A & N/A\\ 
   & N/A & N/A & N/A & N/A& N/A & N/A& N/A&N/A & N/A &N/A & N/A & N/A\\ 
    & N/A & N/A & N/A & N/A& N/A & N/A& N/A&N/A & N/A &N/A & N/A & N/A\\ &\multicolumn{12}{c}{}\\
   \hline
\end{tabular}
\caption{Comparison of average (SE) support recovery for MODEL 3 which is a DENSE RANDOM GRAPH where GLASSO has been used to estimate the precision matrix 
over 100 replications \label{tab:dense_random_graph_roc} }
\end{table}
\end{landscape}

   \renewcommand{\arraystretch}{1.2}
\begin{landscape}
\begin{table}[!ht]
\centering
\begin{tabular}{r@{\hskip 0.5in} r@{ (}r@{) \hskip 0.3in}r@{ (}r@{) \hskip 0.3in}r@{ (}r@{) \hskip 0.3in}r@{ (}r@{) \hskip 0.3in}r@{ (}r@{) \hskip 0.3in}r@{ (}r@{) \hskip 0.3in}}
  &\multicolumn{2}{c}{\itshape gic} &  \multicolumn{2}{c}{\itshape cv}& \multicolumn{2}{c}{\itshape aic}& \multicolumn{2}{c}{\itshape bic} & \multicolumn{2}{c}{\itshape gbic} & \multicolumn{2}{c}{\itshape kl oracle}\\ 
  \hline
&\multicolumn{12}{c}{}\\
p&\multicolumn{12}{c}{\textbf{ Kullback Leibler loss function }}\\
&\multicolumn{12}{c}{}\\
30 & $\bold{1.82}$ & 0.10 & 2.56 & 0.23 & $\bold{1.82}$ & 0.10 & 4.05 & 0.11 & 2.13 & 0.21 & 1.81 & 0.10 \\ 
60 & $\bold{4.52}$ & 0.17 & 6.86 & 0.36 & 5.58 & 1.96 & 9.56 & 0.17 & 5.10 & 0.36 & 4.52 & 0.17 \\ 
90 & $\bold{7.85}$ & 0.22 & 11.24 & 0.29 & 13.60 & 0.13 & 14.14 & 0.20 & 8.92 & 0.44 & 7.84 & 0.22 \\ 
120 & $\bold{11.30}$ & 0.29 & 15.63 & 0.27 & 18.32 & 0.16 & 19.03 & 0.23 & 12.70 & 0.57 & 11.30 & 0.28 \\ 
  &\multicolumn{12}{c}{}\\
  &\multicolumn{12}{c}{\textbf{Operator norm}}\\
  &\multicolumn{12}{c}{}\\
30  & 9.10 & 0.07 & 9.30 & 0.04 & $\bold{9.09}$ & 0.07 & 9.48 & 0.02 & 9.24 & 0.05 & 9.09 & 0.07 \\ 
  60  & $\bold{17.04}$ & 0.05 & 17.12 & 0.02 & 17.08 & 0.08 & 17.24 & 0.01 & 17.13 & 0.04 & 17.04 & 0.05 \\ 
  90  & $\bold{20.48}$ & 0.05 & 20.58 & 0.01 & 20.70 & 0.01 & 20.73 & 0.01 & 20.60 & 0.03 & 20.48 & 0.05 \\ 
  120  & $\bold{23.07}$ & 0.06 & 23.17 & 0.01 & 23.30 & 0.01 & 23.33 & 0.01 & 23.19 & 0.03 & 23.07 & 0.05 \\ 
     &\multicolumn{12}{c}{}\\
  &\multicolumn{12}{c}{\textbf{Matrix $\ell_1$ norm}}\\
  &\multicolumn{12}{c}{}\\
   30 & 9.87 & 0.17 & 10.01 & 0.14 & $\bold{9.86}$ & 0.17 & 10.12 & 0.02 & 9.98 & 0.16 & 9.86 & 0.17 \\ 
   60 & 19.52 & 0.33 & 19.57 & 0.21 & $\bold{19.51}$ & 0.29 & 19.55 & 0.01 & 19.62 & 0.30 & 19.52 & 0.33 \\ 
   90 & 23.27 & 0.26 & $\bold{23.20}$ & 0.11 & 23.25 & 0.01 & 23.28 & 0.01 & 23.36 & 0.21 & 23.27 & 0.26 \\ 
   120 & 26.23 & 0.30 & $\bold{26.09}$ & 0.07 & 26.18 & 0.01 & 26.21 & 0.01 & 26.28 & 0.23 & 26.23 & 0.30 \\ 
  &\multicolumn{12}{c}{}\\
   &\multicolumn{12}{c}{\textbf{Frobenius norm}}\\
   &\multicolumn{12}{c}{}\\
  30 & 9.51 & 0.07 & 9.94 & 0.08 & $\bold{9.50}$ & 0.07 & 10.57 & 0.04 & 9.75 & 0.09 & 9.50 & 0.07 \\ 
  60 & $\bold{17.80}$ & 0.07 & 18.41 & 0.06 & 18.07 & 0.51 & 19.10 & 0.04 & 18.04 & 0.09 & 17.80 & 0.07 \\ 
  90 & $\bold{21.61}$ & 0.07 & 22.41 & 0.05 & 23.01 & 0.03 & 23.12 & 0.04 & 22.01 & 0.09 & 21.61 & 0.07 \\ 
  120 & $\bold{24.47}$ & 0.09 & 25.38 & 0.05 & 26.01 & 0.03 & 26.13 & 0.04 & 24.93 & 0.09 & 24.48 & 0.08 \\          &\multicolumn{12}{c}{}\\
   \hline
\end{tabular}
\caption{Comparison of average (SE) matrix losses for MODEL 3 which is a DENSE RANDOM  GRAPH 
where ADAPTIVE LASSO has been used to estimate the precision matrix 
over 100 replications \label{tab:randomgraph_adaptive_norm} }
\end{table}
\end{landscape}

\renewcommand{\arraystretch}{1.2}
\begin{landscape}
\begin{table}[!ht]
\centering
\begin{tabular}{r@{\hskip 0.5in} r@{ (}r@{) \hskip 0.3in}r@{ (}r@{) \hskip 0.3in}r@{ (}r@{) \hskip 0.3in}r@{ (}r@{) \hskip 0.3in}r@{ (}r@{) \hskip 0.3in}r@{ (}r@{) \hskip 0.3in}}
  &\multicolumn{2}{c}{\itshape gic} &  \multicolumn{2}{c}{\itshape cv}& \multicolumn{2}{c}{\itshape aic}& \multicolumn{2}{c}{\itshape bic} & \multicolumn{2}{c}{\itshape gbic} & \multicolumn{2}{c}{\itshape kl oracle}\\ 
  \hline
&\multicolumn{12}{c}{}\\
p&\multicolumn{12}{c}{\textbf{Specificity}}\\
&\multicolumn{12}{c}{}\\
    & N/A & N/A & N/A & N/A& N/A & N/A& N/A&N/A & N/A &N/A & N/A & N/A\\ 
  & N/A & N/A & N/A & N/A& N/A & N/A& N/A&N/A & N/A &N/A & N/A & N/A\\ 
   & N/A & N/A & N/A & N/A& N/A & N/A& N/A&N/A & N/A &N/A & N/A & N/A\\ 
    & N/A & N/A & N/A & N/A& N/A & N/A& N/A&N/A & N/A &N/A & N/A & N/A\\    &\multicolumn{12}{c}{}\\
  &\multicolumn{12}{c}{\textbf{Sensitivity}}\\
  &\multicolumn{12}{c}{}\\
   30  & 17.23 & 2.71 & 4.24 & 2.22 & $\bold{17.93}$ & 2.96 & 0.00 & 0.00 & 8.41 & 2.58 & 18.03 & 2.89 \\ 
    60  & $\bold{10.96}$ & 1.48 & 0.86 & 0.56 & 8.30 & 4.85 & 0.00 & 0.00 & 5.73 & 1.34 & 11.06 & 1.49 \\ 
    90  & $\bold{10.84}$ & 1.25 & 0.32 & 0.20 & 0.00 & 0.00 & 0.00 & 0.00 & 4.11 & 1.12 & 10.95 & 1.29 \\ 
    120  & $\bold{9.83}$ & 1.52 & 0.12 & 0.09 & 0.00 & 0.00 & 0.00 & 0.00 & 3.08 & 0.89 & 9.63 & 1.31 \\ 
          &\multicolumn{12}{c}{}\\
  &\multicolumn{12}{c}{\textbf{MCC}}\\
  &\multicolumn{12}{c}{}\\
     & N/A & N/A & N/A & N/A& N/A & N/A& N/A&N/A & N/A &N/A & N/A & N/A\\ 
   & N/A & N/A & N/A & N/A& N/A & N/A& N/A&N/A & N/A &N/A & N/A & N/A\\ 
    & N/A & N/A & N/A & N/A& N/A & N/A& N/A&N/A & N/A &N/A & N/A & N/A\\ 
     & N/A & N/A & N/A & N/A& N/A & N/A& N/A&N/A & N/A &N/A & N/A & N/A\\ &\multicolumn{12}{c}{}\\
   \hline
\end{tabular}
\caption{Comparison of average (SE) support recovery for MODEL 3 which is a DENSE RANDOM GRAPH where ADAPTIVE LASSO has been used to estimate the precision matrix 
over 100 replications \label{tab:adaptive_random_roc} }
\end{table}
\end{landscape}

  \renewcommand{\arraystretch}{1.2}
  \begin{landscape}
  \begin{table}[!ht]
  \centering
  \begin{tabular}{r@{\hskip 0.5in} r@{ (}r@{) \hskip 0.3in}r@{ (}r@{) \hskip 0.3in}r@{ (}r@{) \hskip 0.3in}r@{ (}r@{) \hskip 0.3in}r@{ (}r@{) \hskip 0.3in}r@{ (}r@{) \hskip 0.3in}}
    &\multicolumn{2}{c}{\itshape gic} &  \multicolumn{2}{c}{\itshape cv}& \multicolumn{2}{c}{\itshape aic}& \multicolumn{2}{c}{\itshape bic} & \multicolumn{2}{c}{\itshape gbic} & \multicolumn{2}{c}{\itshape kl oracle}\\ 
    \hline
  &\multicolumn{12}{c}{}\\
  p&\multicolumn{12}{c}{\textbf{ Kullback Leibler loss function}}\\
  &\multicolumn{12}{c}{}\\
   30 & $\bold{1.79}$ & 0.11 & 1.83 & 0.14 & 2.42 & 0.22 & 2.42 & 0.30 & 2.49 & 0.36 & 1.77 & 0.10 \\ 
   60 & $\bold{4.10}$ & 0.15 & $\bold{4.10}$ & 0.16 & 7.85 & 0.75 & 6.26 & 0.57 & 6.13 & 0.48 & 4.07 & 0.15 \\ 
   90 & $\bold{7.03}$ & 0.21 & 7.04 & 0.23 & 15.58 & 1.97 & 11.05 & 0.56 & 10.44 & 0.66 & 6.97 & 0.20 \\ 
   120 & $\bold{9.60}$ & 0.24 & 9.63 & 0.27 & 32.62 & 4.23 & 13.99 & 0.32 & 13.72 & 0.43 & 9.57 & 0.24 \\ 
    &\multicolumn{12}{c}{}\\
    &\multicolumn{12}{c}{\textbf{Operator norm}}\\
    &\multicolumn{12}{c}{}\\
   30  & 10.74 & 0.06 & 10.57 & 0.16 & $\bold{9.64}$ & 0.26 & 10.94 & 0.09 & 10.96 & 0.05 & 10.70 & 0.07 \\ 
   60  & 15.16 & 0.05 & 15.13 & 0.07 & $\bold{13.98}$ & 0.24 & 15.36 & 0.02 & 15.35 & 0.02 & 15.14 & 0.04 \\ 
   90  & 20.02 & 0.04 & 20.02 & 0.06 & $\bold{18.77}$ & 0.29 & 20.22 & 0.02 & 20.20 & 0.02 & 19.99 & 0.04 \\ 
   120  & 20.69 & 0.04 & 20.70 & 0.04 & $\bold{18.94}$ & 0.30 & 20.90 & 0.01 & 20.89 & 0.01 & 20.69 & 0.03 \\ 
       &\multicolumn{12}{c}{}\\
    &\multicolumn{12}{c}{\textbf{Matrix $\ell_1$ norm}}\\
    &\multicolumn{12}{c}{}\\
     30 & 11.76 & 0.18 & 11.65 & 0.24 & $\bold{11.08}$ & 0.41 & 11.92 & 0.15 & 11.93 & 0.12 & 11.72 & 0.19 \\ 
     60 & 17.59 & 0.26 & 17.57 & 0.27 & $\bold{17.54}$ & 0.74 & 17.79 & 0.13 & 17.80 & 0.13 & 17.57 & 0.26 \\ 
     90 & 23.72 & 0.27 & $\bold{23.71}$ & 0.27 & 24.18 & 0.86 & 23.97 & 0.12 & 23.97 & 0.13 & 23.68 & 0.28 \\ 
     120 & $\bold{23.18}$ & 0.21 & 23.19 & 0.22 & 26.78 & 1.12 & 23.31 & 0.07 & 23.31 & 0.08 & 23.18 & 0.21 \\ 
     &\multicolumn{12}{c}{}\\
     &\multicolumn{12}{c}{\textbf{Frobenius norm}}\\
     &\multicolumn{12}{c}{}\\
     30 & 11.05 & 0.07 & 10.91 & 0.12 & $\bold{10.44}$ & 0.17 & 11.45 & 0.15 & 11.49 & 0.14 & 11.00 & 0.06 \\ 
      60 & 15.80 & 0.07 & $\bold{15.77}$ & 0.08 & 16.03 & 0.17 & 16.54 & 0.16 & 16.50 & 0.13 & 15.77 & 0.04 \\ 
     90 & 21.06 & 0.08 & $\bold{21.05}$ & 0.09 & 22.03 & 0.48 & 22.05 & 0.11 & 21.91 & 0.13 & 20.99 & 0.04 \\ 
      120 & $\bold{21.89}$ & 0.05 & 21.91 & 0.07 & 26.27 & 1.27 & 22.92 & 0.07 & 22.86 & 0.09 & 21.89 & 0.03 \\ &\multicolumn{12}{c}{}\\
     \hline
  \end{tabular}
  \caption{Comparison of average (SE) matrix losses for MODEL 3 which is a DENSE RANDOM  GRAPH 
  where  SCAD has been used to estimate the precision matrix 
  over 100 replications \label{tab:randomgraph_scad_norm_1} }
  \end{table}
  \end{landscape}

 \renewcommand{\arraystretch}{1.2}
 \begin{landscape}
 \begin{table}[!ht]
 \centering
 \begin{tabular}{r@{\hskip 0.5in} r@{ (}r@{) \hskip 0.3in}r@{ (}r@{) \hskip 0.3in}r@{ (}r@{) \hskip 0.3in}r@{ (}r@{) \hskip 0.3in}r@{ (}r@{) \hskip 0.3in}r@{ (}r@{) \hskip 0.3in}}
   &\multicolumn{2}{c}{\itshape gic} &  \multicolumn{2}{c}{\itshape cv}& \multicolumn{2}{c}{\itshape aic}& \multicolumn{2}{c}{\itshape bic} & \multicolumn{2}{c}{\itshape gbic} & \multicolumn{2}{c}{\itshape kl oracle}\\ 
   \hline
 &\multicolumn{12}{c}{}\\
 p&\multicolumn{12}{c}{\textbf{Specificity}}\\
 &\multicolumn{12}{c}{}\\
    & N/A & N/A & N/A & N/A& N/A & N/A& N/A&N/A & N/A &N/A & N/A & N/A\\ 
  & N/A & N/A & N/A & N/A& N/A & N/A& N/A&N/A & N/A &N/A & N/A & N/A\\ 
   & N/A & N/A & N/A & N/A& N/A & N/A& N/A&N/A & N/A &N/A & N/A & N/A\\ 
    & N/A & N/A & N/A & N/A& N/A & N/A& N/A&N/A & N/A &N/A & N/A & N/A\\    &\multicolumn{12}{c}{}\\
   &\multicolumn{12}{c}{\textbf{Sensitivity}}\\
   &\multicolumn{12}{c}{}\\
   30  & 29.70 & 3.19 & 37.16 & 5.26 & $\bold{56.15}$ & 3.70 & 10.12 & 5.57 & 9.01 & 4.36 & 32.77 & 3.31 \\ 
   60  & 20.74 & 2.16 & 22.55 & 3.09 & $\bold{48.35}$ & 2.59 & 1.69 & 1.26 & 1.95 & 1.12 & 22.59 & 1.46 \\ 
   90  & 16.39 & 2.07 & 16.91 & 2.54 & $\bold{42.69}$ & 2.63 & 0.72 & 0.59 & 1.39 & 0.83 & 18.48 & 1.09 \\ 
   120  & 14.20 & 1.61 & 14.03 & 2.26 & $\bold{42.66}$ & 1.81 & 0.27 & 0.15 & 0.42 & 0.30 & 14.50 & 0.66 \\ 
           &\multicolumn{12}{c}{}\\
   &\multicolumn{12}{c}{\textbf{MCC}}\\
   &\multicolumn{12}{c}{}\\
    & N/A & N/A & N/A & N/A& N/A & N/A& N/A&N/A & N/A &N/A & N/A & N/A\\ 
  & N/A & N/A & N/A & N/A& N/A & N/A& N/A&N/A & N/A &N/A & N/A & N/A\\ 
   & N/A & N/A & N/A & N/A& N/A & N/A& N/A&N/A & N/A &N/A & N/A & N/A\\ 
    & N/A & N/A & N/A & N/A& N/A & N/A& N/A&N/A & N/A &N/A & N/A & N/A\\ &\multicolumn{12}{c}{}\\
    \hline
 \end{tabular}
 \caption{Comparison of average (SE) support recovery for MODEL 3 which is a DENSE RANDOM GRAPH where SCAD has been used to estimate the precision matrix 
 over 100 replications \label{tab:bandgraph_2} }
 \end{table}
 \end{landscape}

\bibliographystyle{model1-num-names}
\bibliography{References}







\end{document}